\documentclass[11pt]{article}

\usepackage{amsmath,amssymb,amsthm,mathtools}
\usepackage[margin=1in]{geometry}
\usepackage{enumitem}
\usepackage[colorlinks=true,linkcolor=blue,citecolor=blue,urlcolor=blue]{hyperref}

\newtheorem{theorem}{Theorem}
\newtheorem{proposition}[theorem]{Proposition}
\newtheorem{lemma}[theorem]{Lemma}
\newtheorem{corollary}[theorem]{Corollary}
\newtheorem{remark}[theorem]{Remark}

\newcommand{\F}{\mathbb F}
\newcommand{\Prb}{\Pr}
\newcommand{\E}{\mathbb E}
\newcommand{\M}{\mathsf M}
\newcommand{\Span}{\operatorname{span}}
\newcommand{\rank}{\operatorname{rank}}
\newcommand{\nul}{\operatorname{nul}}
\newcommand{\Load}{\operatorname{load}}

\title{A Note on Second-Order Expected Maximum-Load Bounds for\\ Binary Linear Hashing}
\author{Nader H. Bshouty\\ Technion}
\date{}

\begin{document}
\maketitle

\begin{abstract}
Let \(S\subseteq \F_2^u\) have size \(n=2^\ell\), and let
\(h:\F_2^u\to \F_2^\ell\) be a uniformly random linear map. For
\(y\in\F_2^\ell\), write \(\Load_h(y):=|h^{-1}(y)\cap S|\), and let
\(\M(S,h):=\max_{y\in\F_2^\ell}\Load_h(y)\) be the maximum load. Jaber, Kumar
and Zuckerman (STOC 2025) proved that the expected maximum load of \(h\) on
\(S\) is at most \(16\log n/\log\log n\), matching the fully independent
keys-into-bins scale up to constants. Their proof also gives the tail estimate
\[
 \Prb\left[
 \M(S,h)\ge R\frac{\log n}{\log\log n}
 \right]
 \le O\left(\frac{1}{R^{2}}\right).
\]
We record a base optimization in their exponential-potential method showing
that binary linear hashing nearly matches fully independent hashing also at the
level of the second-order maximum-load scale. For every \(R>1\) satisfying
\(R\ell^{1-1/R}\ge D\ln\ell\), where \(D\) is an absolute constant, we prove
\[
 \Prb\left[
 \M(S,h)\ge R\frac{\log n}{\log\log n}
 \right]
 \le
 O\left(
 \frac{(\log\log n)^2}{R^2(\log n)^{2-2/R}}
 \right).
\]
Integrating this tail yields
\[
 \E[\M(S,h)]
 \le
 \left(
 1+
 (1+o(1))
 \frac{\log\log\log n}{\log\log n}
 \right)
 \frac{\log n}{\log\log n}.
\]
Thus binary linear hashing matches fully independent hashing in the leading
term and matches the dominant second-order correction up to a \(1+o(1)\)
factor.

We also prove, by an independent self-contained argument, a sharp tail bound for
one prescribed bucket: for fixed \(y\in\F_2^\ell\),
\[
 \Prb[\Load_h(y)>2^a-2]\le \gamma^{-1}2^{-a^2},
\]
where \(\gamma=\prod_{j\ge1}(1-2^{-j})\). A subspace construction shows that this
is asymptotically tight even in the leading constant as \(a\to\infty\).
However, this controls only a fixed bucket; a direct union bound over all
buckets loses a factor \(2^\ell\).
\end{abstract}
\newpage
\tableofcontents
\section{Introduction}

Hashing is often used to distribute a set of keys among bins. If the hash
function maps many keys to the same bin, then operations such as lookup with
chaining may take a long time. Thus a central quantity is the maximum load, the
largest number of keys mapped to any single bin.

In this note we study this question for binary linear hashing. Let \(u\ge\ell\),
let \(h:\F_2^u\to\F_2^\ell\) be a uniformly random linear map, and let
\(S\subseteq\F_2^u\) have size \(n=2^\ell\). For \(y\in\F_2^\ell\), define
\[
 \Load_h(y):=|h^{-1}(y)\cap S|,
\]
and define the maximum load
\[
 \M(S,h):=\max_{y\in\F_2^\ell}\Load_h(y).
\]
Throughout this paper, all logarithms are base \(2\). We call the elements of \(S\) {\it keys} and the elements of \(\F_2^\ell\) {\it bins}.
For \(y\in\F_2^\ell\), the {\it bucket} at \(y\) is the set
 $h^{-1}(y)\cap S$.

For fully independent hashing of \(n\) keys into \(n\) bins, the expected
maximum load is
\[
 (1+o(1))\frac{\log n}{\log\log n}.
\]
More precisely, the classical second-order asymptotic is
\[
 \E[M_n]
 =
 \frac{\log n}{\log\log n}
 +
 (1+o(1))
 \frac{\log n\cdot\log\log\log n}{(\log\log n)^2}.
\]
A uniformly random linear map is much simpler than a fully random function, but
it is only pairwise independent. Therefore the fully independent keys-into-bins
analysis does not directly apply.

\subsection{The expected max-load of a linear map}

The maximum-load behavior of linear hashing has been studied for several
decades. Alon, Dietzfelbinger, Miltersen, Petrank and Tardos~\cite{ADMP97} proved that, for
binary linear hashing,
\[
 \E_h[\M(S,h)]=O(\log n\log\log n).
\]
They asked whether the fully independent scale \(O(\log n/\log\log n)\) is also
valid for random binary linear maps.

Jaber, Kumar and Zuckerman \cite{JKZ25} then proved the optimal expected maximum-load bound
for the binary case. They showed that
\[
 \E_h[\M(S,h)]\le 16\frac{\log n}{\log\log n}.
\]
They also proved the tail estimate
\[
 \Prb\left[
 \M(S,h)\ge R\frac{\log n}{\log\log n}
 \right]
 \le
 O\left(\frac1{R^2}\right).
\]
Their proof uses an exponential-potential method. One advantage of this method
is that it directly tracks the growth of bucket loads as the kernel of the
linear map is revealed one dimension at a time.

\subsection{Our result and comparison with fully independent hashing}

We record a small optimization of the Jaber--Kumar--Zuckerman potential
argument. Their method uses an exponential potential with a certain base. To
detect a bucket of load
\[
 R\frac{\log n}{\log\log n},
\]
it is enough to use base roughly \(\ell^{1/R}\), rather than base roughly
\(\ell\). This lowers the initial potential and gives a sharper tail bound.

Our main tail estimate is the following. There are absolute constants
\(C,D>0\) such that, for every \(R>1\) satisfying
\[
 R\ell^{1-1/R}\ge D\ln\ell,
\]
we have
\[
 \Prb\left[
 \M(S,h)\ge R\frac{\log n}{\log\log n}
 \right]
 \le
 C
 \frac{(\log\log n)^2}
 {R^2(\log n)^{2-2/R}}.
\]
This improves the \(O(R^{-2})\) tail bound in the range where \(R\) is fixed,
and it remains useful even when \(R\) is very close to \(1\).

Integrating the optimized tail gives
\[
 \E_h[\M(S,h)]
 \le
 \frac{\log n}{\log\log n}
 +
 (1+o(1))
 \frac{\log n\cdot\log\log\log n}{(\log\log n)^2}.
\]
Thus binary linear hashing matches fully independent hashing in the leading
term and also matches the dominant part of the second-order correction. 

\subsection{A sharp tail bound for one prescribed bucket}

We also prove a separate fixed-bucket result. This part is independent of the
potential method.

Fix one bucket \(y\in\F_2^\ell\). We prove that
\[
 \Prb[\Load_h(y)>2^a]
 \le
 \gamma^{-1}2^{-a^2},
 \qquad
 \gamma:=\prod_{j=1}^{\infty}(1-2^{-j}).
\]
The proof is elementary. If a prescribed bucket is large, then it contains many
ordered linearly independent tuples. On the other hand, a random linear map
sends any fixed independent tuple to that prescribed bucket with very small
probability. Counting such tuples and applying Markov's inequality gives the
tail bound.

This fixed-bucket bound is essentially sharp. A subspace example shows that
the upper bound is asymptotically tight even in the leading constant as
\(a\to\infty\). Thus the \(2^{-a^2}\) fixed-bucket tail is the correct scale.

However, this fixed-bucket theorem does not by itself imply the correct
expectation bound for the maximum load. A direct union bound over all
\(2^\ell\) buckets loses a factor \(2^\ell\). Therefore the fixed-bucket
estimate is best viewed as a sharp far-tail result for one prescribed bucket,
while the optimized potential method is needed to control the maximum load near
the expectation scale.

\subsection{The case of \(m\) keys and \(n\) bins}

The same base-optimized argument also applies when the number of keys is not
equal to the number of bins. Let the number of bins be \(n=2^\ell\), let
\(|S|=m\), and write
\[
 \lambda:=\frac mn
\]
for the average load. We consider the regime where the maximum-load scale is
much larger than the average load. For fully independent hashing, this scale is
the value \(t=t(m,n)\) defined by
\[
 t\ln\left(\frac{t}{e\lambda}\right)=\ln n.
\]
Thus the regime we consider is
 ${t}/{\lambda}\to\infty$,
meaning that the largest bucket is expected to be much larger than the average
bucket.

Our method gives the following analogue of the \(m=n\) result. There are
absolute constants \(C,D>0\) such that, for every threshold \(T\) satisfying
 $T\ge D\lambda n^{1/T}$,
one has
\[
 \Pr_h[\M(S,h)\ge T]
 \le
 C\left(\frac{\lambda n^{1/T}}{T}\right)^2.
\]
Taking \(T=Rt\), this becomes
\[
 \Pr_h[\M(S,h)\ge Rt]
 \le
 \frac{C}{R^2}
 \left(\frac{\lambda}{t}\right)^{2-2/R},
\]
for every \(R>1\) satisfying
\[
 R\left(\frac{t}{\lambda}\right)^{1-1/R}\ge D.
\]
Thus, when \(t/\lambda\to\infty\), the tail bound is stronger than the
constant-factor \(O(R^{-2})\) tail: for fixed \(R>1\), the extra factor
\[
 \left(\frac{\lambda}{t}\right)^{2-2/R}
\]
tends to zero.

Integrating the tail gives
\[
 \E_h[\M(S,h)]\le (1+o(1))t
\]
whenever \(t/\lambda\to\infty\). Thus, in the sparse large-load regime, binary
linear hashing matches the fully independent expected maximum-load scale up to
a \(1+o(1)\) factor. 

\subsection{Fixed buckets when the number of keys is arbitrary}

The fixed-bucket estimate also extends to the case of \(m\) keys and \(n\) bins.
Let \(n=2^\ell\), let \(S\subseteq\F_2^d\setminus\{0\}\) be a set of \(m\)
distinct nonzero vectors, and let \(h:\F_2^d\to\F_2^\ell\) be a uniformly
random linear map. For a fixed bucket \(y\in\F_2^\ell\), write
 $Z_y=|\{x\in S:h(x)=y\}|$,
 and
 $\lambda:= m/n$.
Then, for every integer \(r\ge0\), with
 $a=\left\lceil\log(r+2)\right\rceil$,
we prove
\[
 \Prb[Z_y>r]
 \le
 \lambda^a
 \left(\prod_{j=0}^{a-1}(r+2-2^j)\right)^{-1}.
\]
In particular, for thresholds of the form \(2^a-2\),
\[
 \Prb[Z_y>2^a-2]
 \le
 \gamma^{-1}\lambda^a2^{-a^2},
 \qquad
 \gamma:=\prod_{j=1}^{\infty}(1-2^{-j}).
\]

This recovers the balanced fixed-bucket estimate when \(m=n\), since then
\(\lambda=1\). For general \(m\), the extra factor \(\lambda^a\) reflects the
average load of the prescribed bucket. We also show that this dependence is
sharp up to absolute constants: when \(m=2^d\), a subspace construction gives
a matching lower bound of order
 $\lambda^a2^{-a^2}$
for the zero bucket, in the natural range \(a\ge d-\ell\). Thus the
fixed-bucket tail has the correct dependence on both the threshold and the
average load.

\subsection{Dense two-sided bounds}

There is another line of work on binary linear hashing in the dense regime,
where the number of keys is larger than the number of bins. This asks for a
two-sided guarantee: not only should no bucket be too large, but no bucket
should be too small.

Dhar and Dvir \cite{DD24} proved strong \(\ell_\infty\)-type guarantees for random linear
maps using connections to finite-field Kakeya and the polynomial method. In a
related dense setting, their result shows that random linear maps can distribute
a large set nearly as well as fully independent hashing, up to constant factors
in the relevant parameters.

Jaber, Kumar and Zuckerman \cite{JKZ25} also prove a dense two-sided theorem. In their
Section~6, they show that if the number of keys \(m\) is at least on the order
of \(n\log n\), then with high probability every bucket has load within constant
factors of the average load \(m/n\). More precisely, for every
\(0<\varepsilon<1/2\), there are constants \(C_1<C_2\), depending on
\(\varepsilon\), such that if \(m\ge C_1^{-1}n\log n\), then
\[
 \Pr_h\left[
 \forall y\in\F_2^\ell,\quad
 C_1\frac mn
 \le
 |h^{-1}(y)\cap S|
 \le
 C_2\frac mn
 \right]
 \ge
 1-\varepsilon.
\]
Thus in the dense regime all buckets are balanced up to constant factors.

The results in this note are complementary to these dense two-sided bounds.
Our base optimization improves the maximum-load tail in the sparse
large-load regime, where the relevant threshold is much larger than the average
load. It does not improve the dense two-sided theorem above, whose goal is to
control all buckets at the scale \(m/n\), including the lower tail. Similarly,
the fixed-bucket estimates proved here give sharp far upper-tail bounds for one
prescribed bucket, but they do not address the main lower-tail difficulty in the
dense two-sided problem.

\subsection{Organization}

The rest of the paper is organized as follows. Section~2 proves the fixed-bucket
tail bound. We first give the upper bound for one prescribed bucket, and then
show that the \(2^{-a^2}\) behavior is sharp by using a subspace construction.

Section~3 proves the expected maximum-load bound. We recall the
Jaber--Kumar--Zuckerman potential framework, optimize the base in the potential
argument, remove the surjectivity assumption, and integrate the resulting tail
bound. We also compare the fixed-bucket and potential methods.

The appendices contain the auxiliary and extended results. Appendix~A gives
self-contained proofs of the potential lemmas used in Section~3. Appendix~B
extends the fixed-bucket tail estimate to \(m\) keys and \(n\) bins. Appendix~C
extends the maximum-load bound to \(m\) keys and \(n\) bins in the regime where
the fully independent maximum-load scale is much larger than the average load.

\section{A fixed-bucket tail bound}

We begin with a fixed-bucket estimate, which is independent of the potential
method used later for the maximum load. The goal is to understand the load of
one prescribed bucket \(y\), rather than the maximum over all buckets. First we
prove a uniform upper bound showing that, for every fixed \(y\), the probability
that \(\Load_h(y)\) exceeds a dyadic threshold \(2^a\) is at most on the order
of \(2^{-a^2}\). The proof uses a simple counting idea: a heavy bucket must
contain many linearly independent tuples, while a random linear map sends any
fixed independent tuple to the prescribed bucket with probability \(2^{-ma}\).

We then show that this \(2^{-a^2}\) scale is sharp. The lower-bound example
takes the input set to be essentially an \(m\)-dimensional subspace. In that
case the bucket load is governed by the nullity of a random \(m\times m\)
binary matrix, whose distribution has exactly the same \(2^{-a^2}\) behavior.
In fact, the construction nearly matches the leading constant in the dyadic
upper bound as \(a\to\infty\).

This section is independent of the
potential argument used later.

\subsection{Upper bound for a prescribed bucket}
We first prove the upper bound for one fixed bucket. The idea is simple: if the
bucket contains many keys, then it must contain many ordered linearly independent
tuples. But any fixed independent tuple is sent to the prescribed bucket with
very small probability. We count such tuples and then apply Markov's inequality.

\begin{lemma}[Independent tuples in a set of distinct nonzero vectors]
\label{lem:independent-tuples}
Let \(A\subseteq \F_2^n\setminus\{0\}\) have size \(q\). Let
 $a=\left\lceil \log (q+1)\right\rceil$.
Then \(A\) contains at least
\[
 \prod_{j=0}^{a-1}(q+1-2^j)
\]
ordered linearly independent \(a\)-tuples.
\end{lemma}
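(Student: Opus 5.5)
We build the ordered independent tuples greedily, one coordinate at a time, and bound the number of available choices at each step. Write \(A_0:=A\cup\{0\}\), a set of size \(q+1\). Suppose \(v_1,\dots,v_j\in A\) have already been chosen and are linearly independent, with \(0\le j\le a-1\). Their span \(V_j=\Span(v_1,\dots,v_j)\) has exactly \(2^j\) elements, and it contains \(0\). The vector \(v_{j+1}\in A\) keeps the tuple independent exactly when \(v_{j+1}\notin V_j\). Since \(|A_0\cap V_j|\le |V_j|=2^j\), the number of admissible choices is
\[
 |A\setminus V_j| \;=\; |A_0\setminus V_j| \;\ge\; (q+1)-2^j .
\]
Here we used that \(0\in V_j\), so removing \(V_j\) from \(A_0\) and from \(A\) gives the same set.

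Next we check that these lower bounds are positive for every \(j\le a-1\), so that the greedy process never gets stuck. By definition \(a=\lceil\log(q+1)\rceil\), hence \(a-1<\log(q+1)\), that is, \(2^{a-1}<q+1\). Therefore \(q+1-2^j\ge q+1-2^{a-1}>0\) for all \(j\le a-1\), and the bound is a positive integer at every step.

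Multiplying the lower bounds over \(j=0,\dots,a-1\) then gives at least \(\prod_{j=0}^{a-1}(q+1-2^j)\) ordered sequences \((v_1,\dots,v_a)\) of linearly independent elements of \(A\). This is the claim. Different choice sequences give different ordered tuples, so the count is injective.

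This is a routine counting argument, so there is no real obstacle. The one point needing care is the off-by-one bookkeeping: \(0\) is excluded from \(A\) but lies in every span \(V_j\). Working with \(A_0=A\cup\{0\}\) absorbs this, since \(0\) is removed by \(V_j\) anyway, which is why the count is \(q+1-2^j\) rather than \(q-2^j+1\) obtained by a separate adjustment. The degenerate case \(q=0\) gives \(a=0\), an empty product equal to \(1\), and the single empty tuple.
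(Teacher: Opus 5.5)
Your proof is correct and is the paper's argument: you choose $v_{j+1}\in A$ outside $\Span(v_1,\dots,v_j)$ one step at a time, observe that at most $2^j-1$ elements of $A$ lie in that span, so each $j<a$ has at least $q+1-2^j>0$ choices, and multiply (your $A_0=A\cup\{0\}$ device is just a different way of doing this bookkeeping). Your positivity check via $2^{a-1}<q+1$ and your remark on the empty case $q=0$ spell out what the paper states in one line.
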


\begin{proof}
Choose the tuple sequentially. Suppose that
\(v_1,\dots,v_j\in A\) have already been chosen and are linearly independent.
Their span contains \(2^j\) vectors, of which at most \(2^j-1\) are nonzero
vectors of \(A\). Hence the number of choices for \(v_{j+1}\in A\) outside
\(\Span(v_1,\dots,v_j)\) is at least
 $q-(2^j-1)=q+1-2^j$.
Since \(j<a\), this quantity is positive. Multiplying over
\(j=0,1,\dots,a-1\) proves the lemma.
\end{proof}

\begin{theorem}[Fixed-bucket tail]
\label{thm:fixed-bucket}
Let
 $U=\{u_1,\dots,u_{2^m}\}\subseteq \F_2^n\setminus\{0\}$
be a set of distinct nonzero vectors, and let
 $B:\F_2^n\to\F_2^m$
be a uniformly random linear map. Fix \(y\in\F_2^m\), and define
\[
 Z_y:=|\{i:Bu_i=y\}|.
\]
Then, for every integer \(r\ge0\), if
$a=\left\lceil\log (r+2)\right\rceil$,
we have
\[
 \Prb[Z_y>r]
 \le
 \left(\prod_{j=0}^{a-1}(r+2-2^j)\right)^{-1}.
\]
\end{theorem}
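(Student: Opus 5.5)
The plan is a first-moment argument over ordered linearly independent \(a\)-tuples drawn from the bucket, with the count supplied by Lemma~\ref{lem:independent-tuples}. Let \(N\) be the number of ordered \(a\)-tuples \((i_1,\dots,i_a)\) of indices such that \(u_{i_1},\dots,u_{i_a}\) are linearly independent and \(Bu_{i_j}=y\) for every \(j\).

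First I bound \(N\) from below on the event \(Z_y>r\). On this event the set \(A=\{u_i: Bu_i=y\}\) consists of distinct nonzero vectors and has size \(q\ge r+1\). I apply Lemma~\ref{lem:independent-tuples} to an arbitrary subset \(A'\subseteq A\) of size exactly \(r+1\). Since \(\lceil\log(r+2)\rceil=a\), the lemma yields at least \(\prod_{j=0}^{a-1}(r+2-2^j)\) ordered independent \(a\)-tuples inside \(A'\). Each factor is positive because \(2^j\le 2^{a-1}<r+2\) for \(j\le a-1\). Every such tuple is counted in \(N\), so on this event
\[
N\ge P:=\prod_{j=0}^{a-1}(r+2-2^j).
\]

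Next I bound \(\E[N]\) from above. Fix any ordered linearly independent tuple \((v_1,\dots,v_a)\) in \(\F_2^n\). Extend it to a basis. A uniformly random linear map \(B\) is determined by independent uniform images of the basis vectors, so \((Bv_1,\dots,Bv_a)\) is uniform on \((\F_2^m)^a\). Hence
\[
\Prb[Bv_j=y \ \forall j]=2^{-ma}.
\]
The number of ordered independent \(a\)-tuples from \(U\) is at most \(|U|^a=2^{ma}\). By linearity of expectation, \(\E[N]\le 2^{ma}\cdot 2^{-ma}=1\).

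Finally, Markov's inequality gives
\[
\Prb[Z_y>r]\le \Prb[N\ge P]\le \E[N]/P\le P^{-1},
\]
which is the claimed bound.

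The only delicate point is the uniformity step. Independence of the tuple is exactly what makes the joint image uniform, including when \(y=0\); without independence, dependent tuples would have much larger probability of landing in the bucket. The restriction to distinct nonzero vectors matters for a different reason: the lemma needs nonzero vectors, and for \(y\neq0\) the bucket cannot contain \(0\) anyway. Because the lemma is applied to a subset of size exactly \(r+1\), no monotonicity in \(q\) needs to be checked.
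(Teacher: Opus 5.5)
Your proof is correct and follows the paper's argument. Both count the ordered linearly independent $a$-tuples from $U$ that land in bucket $y$, bound their expected number by $|U|^a 2^{-ma}=1$, lower-bound the count on the event $Z_y>r$ via Lemma~\ref{lem:independent-tuples}, and finish with Markov's inequality. Your choice to apply the lemma to a subset of size exactly $r+1$ is a slightly cleaner way to match the lemma's hypothesis of size exactly $q$; the paper applies it directly to a bucket of size at least $q$.
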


\begin{proof}
Let
 $q=r+1$, and
$a=\left\lceil\log (q+1)\right\rceil$.
Let \(\mathcal I_a\) be the set of all ordered \(a\)-tuples
\((i_1,\ldots,i_a)\) such that
\(u_{i_1},\dots,u_{i_a}\) are linearly independent. For a linear map \(B\), define
\[
 T_a(B)
 :=
 \sum_{(i_1,\dots,i_a)\in\mathcal I_a}
 \mathbf 1[Bu_{i_1}=\cdots=Bu_{i_a}=y].
\]
Thus \(T_a(B)\) is the number of ordered linearly independent \(a\)-tuples
\((u_{i_1},\dots,u_{i_a})\) such that
 $Bu_{i_1}=\cdots=Bu_{i_a}=y$.

For a fixed ordered linearly independent \(a\)-tuple, the random vectors
\(Bu_{i_1},\dots,Bu_{i_a}\) are independent and uniformly distributed in
\(\F_2^m\).\footnote{Indeed, after fixing bases of \(\F_2^n\) and
\(\F_2^m\), the map \(B\) is represented by a uniformly random
\(m\times n\) binary matrix. Since \(u_{i_1},\dots,u_{i_a}\) are linearly
independent, the random vectors \(Bu_{i_1},\dots,Bu_{i_a}\) are independent
uniform elements of \(\F_2^m\).}
Therefore
\[
 \Prb_B[Bu_{i_1}=\cdots=Bu_{i_a}=y]=2^{-ma}.
\]
Taking expectation over the random choice of \(B\), linearity of expectation gives
\[
\begin{aligned}
 \E_B[T_a(B)]
 &=
 \sum_{(i_1,\dots,i_a)\in\mathcal I_a}
 \Prb_B[Bu_{i_1}=\cdots=Bu_{i_a}=y]  \\
 &=
 |\mathcal I_a|\,2^{-ma}.
\end{aligned}
\]
Since \(|U|=2^m\), the total number of ordered \(a\)-tuples from \(U\) is
\((2^m)^a=2^{ma}\). Hence
 $|\mathcal I_a|\le 2^{ma}$,
and therefore
\[
 \E_B[T_a(B)]
 \le
 2^{ma}2^{-ma}
 =
 1.
\]

If \(Z_y\ge q\), then the set
 $A:=\{u_i:Bu_i=y\}$
has size at least \(q\). By Lemma~\ref{lem:independent-tuples}, it contains at
least
\[
 M(q):=\prod_{j=0}^{a-1}(q+1-2^j)
\]
ordered linearly independent \(a\)-tuples. Thus
\[
 Z_y\ge q
 \quad\Longrightarrow\quad
 T_a(B)\ge M(q).
\]
By Markov's inequality,
\[
 \Prb[Z_y\ge q]
 \le
 \Prb[T_a(B)\ge M(q)]
 \le
 \frac{\E_B[T_a(B)]}{M(q)}
 \le
 \frac1{M(q)}.
\]
Since \(q=r+1\), this is the desired bound.
\end{proof}

Let
\[
 \gamma:=\prod_{j=1}^{\infty}(1-2^{-j}).
\]
The preceding theorem has a clean form at dyadic thresholds.

\begin{corollary}[Dyadic fixed-bucket thresholds]
\label{cor:dyadic-fixed-bucket}
Under the assumptions of Theorem~\ref{thm:fixed-bucket}, for every integer
\(a\ge1\),
\[
 \Prb[Z_y>2^a-2]
 \le
 \gamma^{-1}2^{-a^2}.
\]
\end{corollary}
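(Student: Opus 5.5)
We will apply Theorem~\ref{thm:fixed-bucket} with \(r=2^a-2\). Then \(r+2=2^a\), so
\[
 \left\lceil\log(r+2)\right\rceil=\left\lceil\log 2^a\right\rceil=a,
\]
and the parameter \(a\) in the theorem coincides with the given \(a\). (For \(a\ge1\) we have \(r\ge0\), so the theorem applies.) The theorem then gives
\[
 \Prb[Z_y>2^a-2]\le\left(\prod_{j=0}^{a-1}(2^a-2^j)\right)^{-1}.
\]

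It remains to bound the product from below. We factor out \(2^a\) from each term:
\[
 \prod_{j=0}^{a-1}(2^a-2^j)=2^{a^2}\prod_{j=0}^{a-1}(1-2^{j-a}).
\]
Reindexing with \(k=a-j\), which runs over \(1,\dots,a\), turns the remaining product into \(\prod_{k=1}^{a}(1-2^{-k})\). Every factor \(1-2^{-k}\) lies in \((0,1)\). So this finite product is at least the infinite product \(\gamma=\prod_{k\ge1}(1-2^{-k})\), since adding more factors below \(1\) can only decrease it. Note also that \(\gamma>0\), because \(\sum_k 2^{-k}<\infty\).

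Hence
\[
 \prod_{j=0}^{a-1}(2^a-2^j)\ge\gamma\,2^{a^2},
\]
and inverting gives
\[
 \Prb[Z_y>2^a-2]\le\gamma^{-1}2^{-a^2}.
\]

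There is no real obstacle here. The only points needing care are that the ceiling evaluates exactly to \(a\) at this threshold, and the direction of the comparison between the finite and infinite products.
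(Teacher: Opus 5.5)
Your proposal is correct and follows the paper's proof exactly: you specialize Theorem~\ref{thm:fixed-bucket} to \(r=2^a-2\), factor \(2^{a^2}\) out of \(\prod_{j=0}^{a-1}(2^a-2^j)\), reindex to \(\prod_{s=1}^{a}(1-2^{-s})\), and bound this below by \(\gamma\). Your explicit checks that \(r\ge0\) and that \(\lceil\log(r+2)\rceil=a\) are details the paper leaves implicit.
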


\begin{proof}
Apply Theorem~\ref{thm:fixed-bucket} with \(r=2^a-2\). Then
\[
 \prod_{j=0}^{a-1}(r+2-2^j)
=
 \prod_{j=0}^{a-1}(2^a-2^j)
 =
 2^{a^2}\prod_{j=0}^{a-1}(1-2^{j-a})
 =
 2^{a^2}\prod_{s=1}^{a}(1-2^{-s}).
\]
Since
 $\prod_{s=1}^{a}(1-2^{-s})\ge \gamma$,
the result follows.
\end{proof}

\subsection{A matching lower bound for the fixed-bucket tail}
We next show that the fixed-bucket upper bound is essentially best possible.
The example is based on a subspace. In this case, the load of the zero bucket is
controlled by the nullity of a random binary matrix, and the nullity distribution
has the same \(2^{-a^2}\) behavior as the upper bound.

\begin{proposition}[Near-sharpness of the fixed-bucket bound]
\label{prop:fixed-bucket-sharpness}
Let
 $\gamma:=\prod_{j=1}^{\infty}(1-2^{-j})$.
For every integer \(a\ge1\), there is a sequence of examples with
\(|U|=2^m\), \(m\to\infty\), such that, for a uniformly random linear map
 $B:\F_2^n\to\F_2^m$,
one has
\[
 \liminf_{m\to\infty}
 \Prb\left[|\{u\in U:Bu=0\}|>2^a-2\right]
 \ge
 \gamma^{-1}2^{-a^2}(1-2^{-a})^2.
\]
In particular, the upper bound
\[
 \Prb[Z_0>2^a-2]\le \gamma^{-1}2^{-a^2}
\]
from Corollary~\ref{cor:dyadic-fixed-bucket} is asymptotically tight in the
leading constant as \(a\to\infty\).
\end{proposition}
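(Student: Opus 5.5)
The example will be a punctured subspace padded with one extra vector, so that the load of the zero bucket is controlled by the kernel of a uniformly random square matrix. Take \(n=m+1\). Let \(U\) consist of all \(2^m-1\) nonzero vectors of \(\F_2^m\times\{0\}\subseteq\F_2^{m+1}\), together with one extra vector \(e_{m+1}\). Then \(|U|=2^m\), and all elements of \(U\) are distinct and nonzero, as Theorem~\ref{thm:fixed-bucket} requires.

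Restricted to \(\F_2^m\times\{0\}\), the uniformly random linear map \(B\) is a uniformly random \(m\times m\) binary matrix \(A\). Every nonzero vector of \(\ker A\) lies in the zero bucket, so
\[
 |\{u\in U:Bu=0\}|\ge 2^{\nul(A)}-1 .
\]
Hence the event \(\nul(A)\ge a\) forces a zero-bucket load of at least \(2^a-1>2^a-2\). It therefore suffices to show
\[
 \liminf_{m\to\infty}\Prb[\nul(A)\ge a]\ge\gamma^{-1}2^{-a^2}(1-2^{-a})^2 .
\]

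For the nullity distribution I will use the classical count of \(m\times m\) matrices over \(\F_2\) of rank \(m-k\). Dividing by \(2^{m^2}\) gives an exact formula for \(\Prb[\nul(A)=k]\). Its limit as \(m\to\infty\) (with \(k\) fixed) is
\[
 \frac{\gamma\,2^{-k^2}}{\prod_{i=1}^{k}(1-2^{-i})^2}.
\]
Deriving this limit cleanly is the main technical step. I would count the pairs consisting of a \(k\)-dimensional kernel (a Gaussian binomial) and a surjection of the complement onto an \((m-k)\)-dimensional image, then let \(m\to\infty\) termwise in the finite products. Uniform convergence is not an issue because \(k=a\) is fixed. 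I only need the single term \(k=a\), since \(\Prb[\nul\ge a]\ge\Prb[\nul=a]\).

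The remaining step is an elementary product estimate. Write \(P_a=\prod_{i=1}^{a}(1-2^{-i})\), so that \(P_a=\gamma/\prod_{i>a}(1-2^{-i})\). Then
\[
 \frac{\gamma}{P_a^2}
 =\frac{\gamma}{P_a}\cdot\frac{1}{P_a}
 =\Bigl(\prod_{i>a}(1-2^{-i})\Bigr)\cdot\gamma^{-1}\prod_{i>a}(1-2^{-i})
 \ge\gamma^{-1}\Bigl(\prod_{i>a}(1-2^{-i})\Bigr)^2 .
\]
The Weierstrass-type inequality \(\prod_{i>a}(1-2^{-i})\ge 1-\sum_{i>a}2^{-i}=1-2^{-a}\) then yields exactly \(\gamma^{-1}2^{-a^2}(1-2^{-a})^2\).

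The "in particular" clause follows by comparing this lower bound with the upper bound \(\gamma^{-1}2^{-a^2}\) from Corollary~\ref{cor:dyadic-fixed-bucket}. Their ratio is at least \((1-2^{-a})^2\to1\), so the corollary is asymptotically tight in the leading constant as \(a\to\infty\).
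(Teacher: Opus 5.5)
Your proposal is correct and follows essentially the same route as the paper. It uses the same punctured-subspace-plus-one-vector example, the same reduction to $\Prb[\nul(A)=a]$ for a uniformly random $m\times m$ matrix, and the same product bound $\prod_{i>a}(1-2^{-i})\ge 1-2^{-a}$. Your limiting value $\gamma\,2^{-a^2}/\prod_{i=1}^{a}(1-2^{-i})^2$ is the paper's $2^{-a^2}\prod_{j>a}(1-2^{-j})^2/\gamma$ written differently, so your final ``$\ge$'' is actually an equality. The only other difference is that you derive the rank distribution by counting, where the paper cites it.
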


\begin{proof}
Let \(W<\F_2^n\) be an \(m\)-dimensional subspace and choose
\(v\notin W\). Let
 $U:=(W\setminus\{0\})\cup\{v\}$.
Then
 $|U|=(2^m-1)+1=2^m$,
and all elements of \(U\) are distinct and nonzero.

Let \(M\) be the restriction of \(B\) to \(W\):
\[
 M:=B|_W:W\to\F_2^m.
\]
Since \(\dim W=m\), after choosing a basis of \(W\), the map \(M\) is represented
by an \(m\times m\) binary matrix. As \(B\) is uniformly random, \(M\) is a
uniformly random linear map from \(W\) to \(\F_2^m\), equivalently a uniformly
random \(m\times m\) binary matrix.

If\footnote{Here \(\nul(M):=\dim\ker M=m-\rank(M)\), since \(\dim W=m\).} \(\nul(M)\ge a\), then
\[
 |(W\setminus\{0\})\cap\ker B|
 =
 2^{\nul(M)}-1
 \ge
 2^a-1.
\]
Therefore
\[
 |\{u\in U:Bu=0\}|>2^a-2.
\]
Consequently,
\[
 \Prb\left[|\{u\in U:Bu=0\}|>2^a-2\right]
 \ge
 \Prb[\nul(M)\ge a]
 \ge
 \Prb[\nul(M)=a].
\]

We use the standard rank distribution formula for random matrices over finite
fields; see, for example, Fulman and Goldstein~\cite{FulmanGoldstein2015}.
For a uniformly random \(m\times m\) matrix over \(\F_2\), the probability of
nullity exactly \(a\), equivalently rank \(m-a\), is
\[
 p_{m,a}
 =
 2^{-a^2}
 \frac{\prod_{j=a+1}^{m}(1-2^{-j})^2}
      {\prod_{j=1}^{m-a}(1-2^{-j})}.
\]
For fixed \(a\), as \(m\to\infty\),
\[
 p_{m,a}
 \longrightarrow
 2^{-a^2}
 \frac{\prod_{j=a+1}^{\infty}(1-2^{-j})^2}{\gamma}.
\]
Using the elementary inequality
 $\prod_j(1-x_j)\ge 1-\sum_j x_j$
for \(0\le x_j\le1\), we get
\[
 \prod_{j=a+1}^{\infty}(1-2^{-j})
 \ge
 1-\sum_{j=a+1}^{\infty}2^{-j}
 =
 1-2^{-a}.
\]
Hence
\[
 \liminf_{m\to\infty}
 \Prb\left[|\{u\in U:Bu=0\}|>2^a-2\right]
 \ge
 \gamma^{-1}2^{-a^2}(1-2^{-a})^2.
\]
Since \((1-2^{-a})^2\to1\) as \(a\to\infty\), this matches the leading constant
\(\gamma^{-1}\) in the dyadic fixed-bucket upper bound.
\end{proof}

\begin{remark}[Why the fixed-bucket bound is not enough for the expectation]
The fixed-bucket bound controls the load of one prescribed bucket. To control
the maximum load, one could try to apply it to all \(2^\ell\) buckets and then
take a union bound. This gives
\[
 \Prb[\M(S,h)>2^a-2]
 \le
 \gamma^{-1}2^{\ell-a^2}.
\]
This bound becomes useful only when \(a^2\) is at least comparable to \(\ell\).
Equivalently, it only controls very large loads, roughly of size
\(2^{\sqrt{\ell}}\) or larger.

However, the expected maximum load is much smaller: it is of order
 ${\ell}/{\log \ell}$.
The logarithm of this load is only \(O(\log\ell)\), far below
\(\sqrt{\ell}\). Therefore the fixed-bucket estimate is useful for the far tail,
but it does not by itself prove the correct expectation bound. For the
expectation-scale bound, we need the potential method used later.
\end{remark}

\section{The Expected Maximum-Load Bound}
In this section we prove the main maximum-load estimate. We first recall the
potential framework of Jaber--Kumar--Zuckerman for uniformly random surjective
linear maps. We then optimize the choice of the potential base to get a sharper
tail bound. Finally, we remove the surjectivity assumption and integrate the
tail to obtain the second-order expectation bound.

\subsection{The Jaber--Kumar--Zuckerman potential framework}

We now recall the potential framework used by Jaber, Kumar and Zuckerman. We
first work with uniformly random surjective maps. 

Let
 $H:\F_2^U\to\F_2^\ell$
be a uniformly random surjective linear map, and let
$ k:=U-\ell$.
A uniformly random surjective map can be sampled by first choosing a uniformly
random \(k\)-dimensional kernel
 $V\le\F_2^U$
and then choosing an isomorphism \(\F_2^U/V\cong\F_2^\ell\). We build the kernel gradually through a chain
\[
 V_0\le V_1\le\cdots\le V_k=V,
\]
where \(V_0=\{0\}\) and \(\dim V_i=i\). Thus, at step \(i\), the subspace
\(V_i\) is the \(i\)-dimensional part of the kernel that has been revealed.

Let \(S\subseteq\F_2^U\) have size \(2^\ell\). For \(x\in\F_2^U\), define
\[
 S_i(x):=|(x+V_i)\cap S|.
\]
The quantity \(S_i(x)\) is the load of the partial bucket \(x+V_i\). At
\(i=0\), the buckets are singletons, so \(S_0(x)=1_S(x)\). At the final step,
\(V_k=\ker H\), and the cosets \(x+V_k=x+\ker H\) are exactly the final buckets
of the hash map \(H\): all points in the same coset have the same image under
\(H\). Thus \(S_k(x)\) is the actual load of the final bucket containing \(x\).

Equivalently, let
\[
G_i:=\F_2^U/V_i
\]
be the quotient group of cosets of \(V_i\), and define
$f_i:G_i\to\mathbb Z_{\ge0}$
by
\[
f_i(x+V_i):=|(x+V_i)\cap S|.
\]
Then
\[
S_i(x)=f_i(x+V_i).
\]

For a base \(b>1\), define the exponential potential
\[
 \Phi_i:=\E_{x\in\F_2^U}\left[b^{S_i(x)}\right].
\]
Since all cosets of \(V_i\) have the same size and \(S_i\) is constant on each
coset, this is equivalently
\[
 \Phi_i=\E_{C\in G_i}\left[b^{f_i(C)}\right].
\]

We use the following two lemmas from the potential analysis of
Jaber, Kumar and Zuckerman. Lemma~\ref{lem:potential-evolution} packages JKZ Lemmas 3 and 4 in the
normalization used here. For completeness, see the proof in Appendix~\ref{ApA}.

\begin{lemma}[JKZ potential evolution]
\label{lem:potential-evolution}
For every \(b>1\), the potentials satisfy
\[
 \E[\Phi_{i+1}\mid \Phi_0,\dots,\Phi_i]
 \le
 \Phi_i^2
\]
and
\[
 \Phi_{i+1}-1\ge 2(\Phi_i-1)
\]
for every \(0\le i<k\).
\end{lemma}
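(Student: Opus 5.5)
We prove the two inequalities separately. In both, we analyze a single step: \(V_{i+1}=V_i+\langle v\rangle\), where, conditionally on \(V_i\) (and hence on \(\Phi_0,\dots,\Phi_i\)), the new coset direction \(\bar v:=v+V_i\) is a uniformly random nonzero element of \(G_i=\F_2^U/V_i\). This holds because a uniformly random flag can be generated by adding a uniform vector outside the current subspace at each step. Each coset \(C'\) of \(V_{i+1}\) is the disjoint union of two cosets \(C\) and \(C+\bar v\) of \(V_i\), so
\[
f_{i+1}(C')=f_i(C)+f_i(C+\bar v).
\]
Averaging over \(x\) uniformly in \(\F_2^U\), this gives the identity
\[
\Phi_{i+1}=\E_{C\in G_i}\left[b^{f_i(C)+f_i(C+\bar v)}\right].
\]
Here each coset \(C'\) is counted once from each of its two halves, and the normalizations match because \(|G_{i+1}|=|G_i|/2\).

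\textbf{First inequality.} Write \(g:=b^{f_i}\) and \(N:=|G_i|\). Averaging the identity over \(\bar v\) uniform in \(G_i\setminus\{0\}\) gives
\[
\E_{\bar v\neq 0}\Phi_{i+1}=\frac{1}{N-1}\Bigl(\sum_{w\in G_i}\E_C[g(C)g(C+w)]-\E_C[g(C)^2]\Bigr).
\]
The full sum is \(\sum_w\E_C[g(C)g(C+w)]=N\,(\E g)^2=N\Phi_i^2\). It therefore suffices to show \(\E_C[g^2]\ge \Phi_i^2\), and this is Cauchy--Schwarz. Then
\[
\E\Phi_{i+1}\le \frac{N\Phi_i^2-\Phi_i^2}{N-1}=\Phi_i^2.
\]
Since this bound holds conditionally on the whole history \(V_0,\dots,V_i\), it also holds conditionally on \(\Phi_0,\dots,\Phi_i\) by the tower property.

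\textbf{Second inequality.} This one is deterministic. Write \(g=1+\delta\) with \(\delta:=b^{f_i}-1\ge0\); note that \(\delta(C)>0\) exactly when \(C\) meets \(S\). Expanding the identity gives
\[
\Phi_{i+1}-1=\E_C[\delta(C)]+\E_C[\delta(C+\bar v)]+\E_C[\delta(C)\delta(C+\bar v)]\ge 2\,\E_C[\delta]=2(\Phi_i-1),
\]
because the cross term is nonnegative and translation by \(\bar v\) preserves the uniform measure on \(G_i\).

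The main obstacle is justifying the conditional-uniformity of \(\bar v\). We will argue that the law of the chain is invariant under \(GL(\F_2^U)\) restricted to the stabilizer of \(V_i\). For a uniformly random chain obtained by choosing \(v_{i+1}\) uniform outside \(V_i\), this is immediate. We also need the chain to end in a uniformly random \(k\)-dimensional kernel, which follows from the same symmetry.
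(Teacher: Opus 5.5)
Your proof is correct and follows essentially the same route as the paper. Both use the merging identity $\Phi_{i+1}=\E_{C}\bigl[b^{f_i(C)+f_i(C+\bar v)}\bigr]$, average over $\bar v\neq0$ and apply Cauchy--Schwarz (the paper calls it Jensen) for the first inequality, and use nonnegativity of the cross term $(b^{f_i(C)}-1)(b^{f_i(C+\bar v)}-1)$ for the second. Your choice to condition on the full history $V_0,\dots,V_i$ before the tower property is slightly more careful than the paper, which says $\Phi_0,\dots,\Phi_i$ are determined by $V_i$ alone; in fact each $\Phi_j$ depends on $V_j$, so that step really relies on the Markov property of the sequential construction.
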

The next lemma is a slightly stronger version of the quadratic tail lemma needed
for the argument. It says that if a nonnegative process grows at least
multiplicatively in the sense \(X_{i+1}-1\ge 2(X_i-1)\), but its conditional
expectation is at most quadratic, then the final value has a polynomial tail.
We include a self-contained proof in Appendix~\ref{ApA}.

\begin{lemma}[Strengthened JKZ quadratic potential tail lemma]
\label{lem:quadratic-tail}
Let \(X_0>1\) be deterministic, and let \(X_1,\dots,X_k\) be nonnegative random
variables satisfying
\[
 X_{i+1}-1\ge 2(X_i-1)
\]
and
\[
 \E[X_{i+1}\mid X_0,\dots,X_i]
 \le
 X_i^2
\]
for every \(0\le i<k\). If
 $\tau\ge 1+4(X_0-1)$,
then
\[
 \Prb\left[X_k\ge \tau^{2^k}\right]
 \le
 48\left(\frac{X_0-1}{\tau-1}\right)^2.
\]
\end{lemma}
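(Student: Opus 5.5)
The plan is to pass to the normalized process \(Z_i:=X_i^{2^{-i}}\), which turns the quadratic-growth hypothesis into a supermartingale statement. Since \(t\mapsto t^{2^{-(i+1)}}\) is concave, conditional Jensen together with \(\E[X_{i+1}\mid X_0,\dots,X_i]\le X_i^2\) gives
\[
\E[Z_{i+1}\mid X_0,\dots,X_i]\le \bigl(\E[X_{i+1}\mid X_0,\dots,X_i]\bigr)^{2^{-(i+1)}}\le Z_i .
\]
So \((Z_i)\) is a nonnegative supermartingale with \(Z_0=X_0\), and the target event is \(\{Z_k\ge\tau\}\). Markov's inequality on \(Z_k-1\) alone only gives the linear bound \(\delta_0/(\tau-1)\), where \(\delta_0:=X_0-1\). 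The quadratic bound therefore has to come from the second hypothesis \(X_{i+1}-1\ge2(X_i-1)\). This says the process can lose very little relative to its conditional mean, and it is what will give the linear bound a second factor \(\delta_0/(\tau-1)\).

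To make this quantitative, I will write the multiplicative increments \(R_{i+1}:=X_{i+1}/X_i^2\). The lower growth condition gives \(X_{i+1}\ge 2X_i-1\), hence
\[
R_{i+1}\ge 1-\varepsilon_i^2,\qquad \varepsilon_i:=\frac{X_i-1}{X_i}.
\]
The quadratic condition gives \(\E[R_{i+1}\mid\cdot]\le1\). Hence \(D_i:=R_{i+1}-(1-\varepsilon_i^2)\) is a nonnegative variable with conditional mean at most \(\varepsilon_i^2\). Taking logarithms,
\[
\log Z_k=\log X_0+\sum_{i<k}2^{-(i+1)}\log R_{i+1}.
\]
Thus \(\log Z_k-\log X_0\) is bounded above by a weighted sum of nonnegative terms \(D_i\) with controlled means, and bounded below by \(-\sum 2^{-(i+1)}\varepsilon_i^2\). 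Both the drift and the fluctuation are governed by the \(\varepsilon_i\). Before the time \(\sigma\) at which \(X_i\) first reaches a constant level such as \(2\), we have \(\varepsilon_i\le X_i-1\). Heuristically \(X_i-1\) grows like \(2^i\delta_0\), so the weighted sum of \(\varepsilon_i^2\) up to \(\sigma\) is \(O(\delta_0)\). The hypothesis \(\tau\ge 1+4\delta_0\) should guarantee that this drift consumes at most a constant fraction of the gap \(\log(\tau/X_0)\gtrsim \tau-1\).

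The remaining step is to bound the fluctuation by a second-moment (Chebyshev-type) argument rather than a first-moment one. Concretely, I will apply Markov to the square of the centered stopped supermartingale \(Z_{i\wedge\sigma}-\E Z_{i\wedge\sigma}\), or equivalently to a quadratic function \(\psi(Z_i)\) adapted to the barrier \(\sigma\). Reaching \(\tau\) requires a deviation of order \(\tau-1\), while the accumulated conditional variance is \(O(\delta_0^2)\) up to the stopping time. After \(\sigma\), with \(X_\sigma\) of constant size and \(Z_\sigma\) already close to its final scale, I will use a plain Markov bound on \(Z\), noting that \(\Pr[\sigma\text{ occurs with } Z_\sigma \text{ large}]\) is itself already of quadratic order. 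Tracking constants through these two regimes and adding the pieces should produce the constant \(48\).

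I expect the main obstacle to be exactly this second-moment step: a nonnegative supermartingale with no further information has only a linear tail. I will first try to extract the conditional variance bound from the one-sided constraint \(D_i\ge0\) with mean \(\le\varepsilon_i^2\), where \(\varepsilon_i^2\) is genuinely \(O(\delta_0^2)\)-small in the early, dangerous steps. The aim is a bound of the form \(\E[(Z_{i+1}-Z_i)^2\mid\cdot]\lesssim 2^{-i}\varepsilon_i^2\cdot Z_i^2\). The delicate part is keeping the argument uniform over the stopping time \(\sigma\).
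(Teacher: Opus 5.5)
Your reduction is sound as far as it goes. $Z_i=X_i^{2^{-i}}$ is a nonnegative supermartingale, and $X_i^2D_i=X_{i+1}-(2X_i-1)\ge0$ has conditional mean at most $(X_i-1)^2$, which is the right raw ingredient. But the proposal stops exactly where the proof has to happen, and both quantitative claims the plan rests on fail.

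\emph{The variance sum.} The conditional variances up to $\sigma$ need not sum to $O(\delta_0^2)$. Your target bound $2^{-i}\varepsilon_i^2Z_i^2$, along the forced growth $X_i-1\ge 2^i\delta_0$, sums over $i<\sigma$ to order $2^{\sigma}\delta_0^2\asymp\delta_0$. Even the sharp per-step order $4^{-i}\varepsilon_i^2Z_i^2$ is attained: take $X_{i+1}=e^2X_i^2$ with probability about $(X_i-1)^2/(e^2-1)$ and $X_{i+1}=2X_i-1$ otherwise; the jump multiplies $Z$ by $e^{2^{-i}}$. This gives order $\delta_0^2$ at each of the roughly $\log_2(1/\delta_0)$ steps before $\sigma$. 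Chebyshev then loses an unbounded factor $\log(1/\delta_0)$, because a jump of size $h\gg\tau-1$ and probability $p$ is charged $ph^2/(\tau-1)^2\gg p$. The lemma must be uniform in $\delta_0$; in the application $\delta_0=(b-1)/2^k$ with $k$ unbounded.

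\emph{The phase after $\sigma$.} A plain Markov bound on $Z$ gives only $\Pr[Z_k\ge\tau\mid X_0,\dots,X_\sigma]\le (Z_\sigma-1)/(\tau-1)$. For $\delta_0<1$ you have $Z_\sigma-1\ge 2^{-\sigma}\ln 2>\delta_0(\ln 2)/2$ on every path, since $X_i-1\ge 2^i\delta_0$ forces $2^\sigma<2/\delta_0$. So this phase of your bound is linear in $\delta_0/(\tau-1)$, however the event that $Z_\sigma$ is large is handled. The root cause is that the supermartingale property of $Z$ discards most of the hypothesis, which in this scale reads $\E[Z_{i+1}^{2^{i+1}}\mid\cdot]\le Z_i^{2^{i+1}}$. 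Markov applied to $Z_k^{2^\sigma}$ would give $X_\sigma\tau^{-2^\sigma}$, but then everything hinges on bounding early arrival at level $2$, which your plan does not address.

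The missing idea is to test against a \emph{bounded} potential. The paper normalizes by the threshold rather than by roots. Set $\rho_i=(X_i-1)/(\tau^{2^i}-1)$ and $F(u)=\min\{1,48u^2\}$. The paper's induction on $k$ conditions on $X_1$, applies the inductive hypothesis to $X_1,\dots,X_k$ with $\tau^2$ in place of $\tau$, and uses the trivial bound $1$ when $1+4(X_1-1)>\tau^2$. This amounts to showing that $F(\rho_i)$ is a supermartingale; since $\mathbf 1[X_k\ge\tau^{2^k}]\le F(\rho_k)$, the bound $48\rho_0^2$ follows.

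The one-step lemma uses only that $F$ is increasing and $8\sqrt3$-Lipschitz. Take $s=\tau-1$, $r=(X_0-1)/s\le 1/4$, and note $\tau^2-1=s(2+s)$. The forced doubling puts $\rho_1$ at least at $2r/(2+s)$, a contraction of $r$ by the factor $2/(2+s)$. The nonnegative excess, of mean at most $r^2s^2$, raises $\E F(\rho_1)$ by at most $8\sqrt3\,r^2s/(2+s)$ through a first-moment estimate. The total is at most $48r^2$.

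The cap at $1$ makes an overshooting jump cost only its probability. The square comes from the quadratic \emph{mean} of the excess, not from any variance bound.
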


For \(y\in\F_2^\ell\), the {\it fiber} of \(H\) over \(y\) is the set
\[
H^{-1}(y)=\{x\in\F_2^U:H(x)=y\}.
\]
Since \(H\) is surjective and \(\ker H=V_k\), every fiber is a coset of \(V_k\).

We also need the following elementary observation.

\begin{lemma}[Heavy bin implies large potential]
\label{lem:heavy-bin-potential}
If some fiber of the surjective map \(H\) contains at least \(T\) elements of
\(S\), then
\[
 \Phi_k\ge \frac{b^T}{2^\ell}.
\]
\end{lemma}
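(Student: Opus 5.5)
The plan is to compute \(\Phi_k\) directly, using the fact that \(S_k\) is constant on cosets of \(V_k=\ker H\), and then drop every term except the one coming from the heavy fiber.

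First, by the coset form of the potential we have \(\Phi_k=\E_{C\in G_k}[b^{f_k(C)}]\), where \(G_k=\F_2^U/V_k\). Since \(H\) is surjective with kernel \(V_k\), the quotient \(G_k\) is isomorphic to \(\F_2^\ell\). So there are exactly \(2^\ell\) cosets, and they are precisely the fibers \(H^{-1}(y)\). This gives
\[
 \Phi_k=2^{-\ell}\sum_{y\in\F_2^\ell} b^{|H^{-1}(y)\cap S|}.
\]

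Next, each summand is positive because \(b>1\) makes \(b^{s}\ge 1>0\). We may therefore discard every term except the one for the heavy fiber \(y_0\), which satisfies \(|H^{-1}(y_0)\cap S|\ge T\). Since \(b>1\), the map \(s\mapsto b^s\) is increasing, so that term is at least \(b^T\). Hence \(\Phi_k\ge b^T/2^\ell\).

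If one prefers the pointwise form \(\Phi_k=\E_{x}[b^{S_k(x)}]\), the argument is the same. The fiber \(H^{-1}(y_0)\) is a coset of \(V_k\) containing a \(2^{-\ell}\) fraction of all \(x\in\F_2^U\), and on it \(S_k(x)\ge T\). All other contributions are nonnegative.

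There is no real obstacle here. The only point needing care is the identification of the final cosets with the \(2^\ell\) fibers, which is exactly where surjectivity of \(H\) is used.
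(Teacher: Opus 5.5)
Your proof is correct and is essentially the paper's argument. The paper also observes that the heavy fiber is a coset \(x+V_k\) of size \(2^k\) on which \(S_k\ge T\), so it contributes at least \(2^k b^T/2^U=b^T/2^\ell\) to the average. That is exactly your pointwise version, and your sum over the \(2^\ell\) fibers is the same computation with the points grouped by coset.
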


\begin{proof}
A fiber of \(H\) is a coset \(x+V_k\). If
 $|(x+V_k)\cap S|\ge T$,
then for every \(z\in x+V_k\),
\[
 S_k(z)=|(z+V_k)\cap S|=|(x+V_k)\cap S|\ge T.
\]
The coset has size \(|V_k|=2^k\). Therefore, its contribution to the average
\(\Phi_k\) is at least
\[
 \frac{2^k b^T}{2^U}
 =\frac{b^T}{2^{U-k}}
 =
 \frac{b^T}{2^\ell}.
\]
\end{proof}

Finally, since \(V_0=\{0\}\), we have
\[
 S_0(x)=1_S(x).
\]
Therefore
\[
 \Phi_0
 =
 \left(1-\frac{|S|}{2^U}\right)
 +
 \frac{|S|}{2^U}b.
\]
Using \(|S|=2^\ell\) and \(k=U-\ell\), this gives
\begin{equation}
\label{eq:initial-potential}
 \Phi_0-1
 =
 \frac{b-1}{2^k}
 \le
 \frac{b}{2^k}.
\end{equation}

\subsection{The optimized tail bound for surjective maps}
We first prove the optimized tail bound in the simpler setting where the linear
map is conditioned to be surjective. The proof follows the
Jaber--Kumar--Zuckerman potential argument, but chooses the base according to
the target load level. This smaller base lowers the initial potential and gives
the extra factor in the tail bound.

We now optimize the base in the potential argument.

\begin{proposition}[Surjective base-optimized tail]
\label{prop:surjective}
There exist absolute constants \(C_0>0\), \(D_0>0\), and \(\ell_0\) such that
the following holds. Let \(U\ge\ell\), let \(S\subseteq\F_2^U\) have size
\(2^\ell\), and let
 $H:\F_2^U\to\F_2^\ell$
be a uniformly random surjective linear map. Then, for every real \(R>1\)
satisfying
\[
 R\ell^{1-1/R}\ge D_0\ln\ell
\]
and every \(\ell\ge\ell_0\),
\[
 \Prb_H\left[
 \M(S,H)
 \ge
 R\frac{\ell}{\log\ell}
 \right]
 \le
 C_0\frac{(\ln\ell)^2}{R^2\ell^{2-2/R}}.
\]
\end{proposition}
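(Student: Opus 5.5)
Set \(T:=\lceil R\ell/\log\ell\rceil\) and choose the base \(b:=\ell^{1/R}\), up to an absolute constant factor to be fixed below. We then combine Lemma~\ref{lem:heavy-bin-potential}, Lemma~\ref{lem:potential-evolution} and Lemma~\ref{lem:quadratic-tail}, applied to \(X_i:=\Phi_i\), with \(X_0=\Phi_0>1\) since \(b>1\).

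\textbf{Step 1: reduce to a potential tail.} By Lemma~\ref{lem:heavy-bin-potential}, the event \(\M(S,H)\ge T\) implies \(\Phi_k\ge b^T2^{-\ell}\). We have \(b^T\ge 2^{(\log\ell/R)\cdot R\ell/\log\ell}=2^{\ell}\), which gives only \(\Phi_k\ge 1\). This is useless, so the base must be slightly larger: take \(b:=c\,\ell^{1/R}\) for a constant \(c>1\), say \(c=4\). Then
\[
b^T2^{-\ell}\ge c^{T}\ge c^{R\ell/\log\ell}.
\]
We want this to be at least \(\tau^{2^k}\), i.e.\ \(2^k\ln\tau\le T\ln c\).

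\textbf{Step 2: choose \(\tau\).} By \eqref{eq:initial-potential}, \(\Phi_0-1\le b/2^k\). Take
\[
\tau:=1+\frac{T\ln c}{2^{k+1}}.
\]
Then \(\ln\tau\le \tau-1\), so \(\tau^{2^k}\le e^{T\ln c/2}\le c^T\), as required. The hypothesis \(\tau\ge 1+4(\Phi_0-1)\) of Lemma~\ref{lem:quadratic-tail} needs \(T\ln c/2\ge 4b\), i.e.\ \(R\ell/\log\ell\gtrsim c\,\ell^{1/R}\). This is exactly the condition \(R\ell^{1-1/R}\ge D_0\ln\ell\) for \(D_0\) large, which explains the hypothesis.

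\textbf{Step 3: apply Lemma~\ref{lem:quadratic-tail}.} Lemma~\ref{lem:potential-evolution} supplies the two evolution conditions. The lemma then gives
\[
\Prb[\M\ge T]\le 48\left(\frac{b/2^k}{T\ln c/2^{k+1}}\right)^2=O\!\left(\frac{b^2}{T^2}\right)=O\!\left(\frac{c^2\ell^{2/R}(\log\ell)^2}{R^2\ell^2}\right).
\]
This is the claimed \(C_0(\ln\ell)^2/(R^2\ell^{2-2/R})\). The factor \(2^k\) cancels, so \(U\) plays no role.

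\textbf{Main obstacle.} The delicate point is the bookkeeping in Steps 1–2. The lost factor \(2^{-\ell}\) in Lemma~\ref{lem:heavy-bin-potential} must be absorbed by a constant-factor enlargement of the base without spoiling the \(\ell^{2/R}\) exponent; a constant \(c\) costs only a constant in \(C_0\). If \(T\) is not an integer, or \(k=0\) (so \(U=\ell\)), these are edge cases. When \(k=0\), \(H\) is a bijection and the maximum load is \(1\), so the bound holds trivially once \(\ell\ge\ell_0\) makes \(T>1\). The rounding of \(T\) only helps, since \(b^T\) increases in \(T\).
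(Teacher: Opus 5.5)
Your proposal is correct and is essentially the paper's proof. The paper takes \(b=\ell^{1/R+1/\ln\ell}=e\,\ell^{1/R}\), which is your choice with \(c=e\), so that \(b^T/2^\ell=e^T\); it sets \(\tau=1+(\ln 2)A\ell/2^k=1+T/2^k\) with \(A=R/\ln\ell\), verifies \(1+4(\Phi_0-1)\le\tau\) from the hypothesis with \(D_0\ge 4e/\ln 2\), and applies Lemma~\ref{lem:quadratic-tail}. The remaining differences are cosmetic and cost only constants: your extra factor \(1/2\) in \(\tau\), the ceiling on \(T\) (the heavy-bin lemma works for real \(T\)), and the separate \(k=0\) case (Lemma~\ref{lem:quadratic-tail} already covers it) are all unnecessary.
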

\begin{proof}
Let
 $k:=U-\ell$,
 $L:={\ell}/{\log \ell}$, and
 $T:=RL=R\ell/\log \ell$.
Choose
\[
 \alpha:=\frac1R+\frac1{\ln\ell}
\]
and set
 $b:=\ell^\alpha$.
Then
$b=\ell^{1/R}\ell^{1/\ln\ell}=e\ell^{1/R}$.
If \(\M(S,H)\ge T\), Lemma~\ref{lem:heavy-bin-potential} gives
 $\Phi_k\ge {b^T}/{2^\ell}$.
Since
 $\log  b=\alpha\log \ell$
and
 $T=R{\ell}/{\log \ell}$,
we have
 $b^T=2^{\alpha R\ell}$.
Thus
\[
 \frac{b^T}{2^\ell}=2^{(\alpha R-1)\ell}.
\]
Let
\[
 A:=\alpha R-1=\frac{R}{\ln\ell}.
\]
Hence
\begin{equation}
\label{eq:heavy-implies-large-potential}
 \M(S,H)\ge T
 \quad\Longrightarrow\quad
 \Phi_k\ge 2^{A\ell}.
\end{equation}

Define
\[
 \tau:=1+\frac{(\ln 2)A\ell}{2^k}.
\]
Then, using \(1+x\le e^x\), with \(x=\tau-1\),
\[
 \tau^{2^k}
 \le
 \exp\left(2^k(\tau-1)\right)
 =
 \exp\left((\ln 2)A\ell\right)
 =
 2^{A\ell}.
\]
Therefore, by \eqref{eq:heavy-implies-large-potential},
\[
 \M(S,H)\ge T
 \quad\Longrightarrow\quad
 \Phi_k\ge \tau^{2^k}.
\]

Choose
 $D_0\ge {4e}/{\ln2}$.
Using \eqref{eq:initial-potential}, the definition of \(\tau\), and the
hypothesis \(R\ell^{1-1/R}\ge D_0\ln\ell\), we get
\begin{eqnarray*}
 4(\Phi_0-1)
 &\le&
 \frac{4b}{2^k}=
 \frac{4e\ell^{1/R}}{2^k}\\
 &\le&
 \frac{(\ln2)R\ell}{2^k\ln\ell}=
 \frac{(\ln2)A\ell}{2^k}=
 \tau-1.
\end{eqnarray*}
Thus
 $1+4(\Phi_0-1)\le \tau$.

Applying Lemma~\ref{lem:quadratic-tail} with \(X_i=\Phi_i\), we get
\[
 \Prb[\M(S,H)\ge T]
 \le
 \Prb[\Phi_k\ge \tau^{2^k}]
 \le
 48\left(\frac{\Phi_0-1}{\tau-1}\right)^2.
\]
Using \eqref{eq:initial-potential} and the definition of \(\tau\), and substituting
 $b=e\ell^{1/R}$,
and 
 $A={R}/{\ln\ell}$,
\[
 \frac{\Phi_0-1}{\tau-1}
 \le
 \frac{b/2^k}{(\ln 2)A\ell/2^k}
 =
 \frac{b}{(\ln 2)A\ell}=\frac{e\ln\ell}{(\ln 2)R\ell^{1-1/R}}.
\]
Therefore
\[
 \Prb[\M(S,H)\ge T]
 \le
 C_0
 \frac{(\ln\ell)^2}{R^2\ell^{2-2/R}}.
\]
This proves the claimed bound with
 $C_0=48\left({e}/{\ln2}\right)^2$.
\end{proof}

\subsection{Removing the surjectivity assumption}
We now pass from uniformly random surjective linear maps to uniformly random
linear maps. The idea is to embed the original space into a larger space. A
random map from the larger space is surjective with very high probability, and
its restriction to the original space is still a uniformly random linear map.

\begin{lemma}[Rank deficiency]
\label{lem:rank-deficiency}
Let
 $H:\F_2^U\to\F_2^\ell$
be a uniformly random linear map. Then
\[
 \Prb[H\text{ is not surjective}]
 \le
 2^{\ell-U}.
\]
\end{lemma}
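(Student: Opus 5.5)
The plan is a union bound over hyperplanes of the target space, one hyperplane for each nonzero linear functional. After fixing bases, \(H\) is a uniformly random \(\ell\times U\) binary matrix. The map \(H\) fails to be surjective exactly when its image lies in some hyperplane of \(\F_2^\ell\). Every such hyperplane is the kernel of a nonzero functional \(c\in\F_2^\ell\setminus\{0\}\), so we have
\[
 \Prb[H\text{ is not surjective}]
 =
 \Prb\bigl[\exists\, c\neq 0:\ c^{\top}H=0\bigr]
 \le
 \sum_{c\neq 0}\Prb\bigl[c^{\top}H=0\bigr].
\]

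Next I compute each term. For a fixed nonzero \(c\), the row vector \(c^{\top}H\in\F_2^U\) is a nonzero linear combination of the rows of \(H\). The rows are independent and uniform in \(\F_2^U\), so \(c^{\top}H\) is itself uniform on \(\F_2^U\). Hence \(\Prb[c^{\top}H=0]=2^{-U}\).

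Summing over the \(2^\ell-1\) nonzero functionals gives
\[
 \Prb[H\text{ is not surjective}]\le (2^\ell-1)2^{-U}\le 2^{\ell-U},
\]
which is the claim. There is no real obstacle here. The only point needing care is the uniformity of \(c^{\top}H\): choose a coordinate \(j\) with \(c_j=1\) and condition on all the other rows. Row \(j\) is uniform and independent of them, so \(c^{\top}H\) equals a fixed vector plus a uniform vector, and is therefore uniform. This is the same argument as in the footnote to Theorem~\ref{thm:fixed-bucket}.
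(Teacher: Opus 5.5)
Your proof is correct, but it takes a different route from the paper's. The paper computes the exact probability of full row rank, $\prod_{j=0}^{\ell-1}(1-2^{j-U})$, obtained by exposing the rows one at a time, where row $j+1$ must avoid the $2^j$-element span of the first $j$ rows. It then applies $1-\prod_j(1-x_j)\le\sum_j x_j$.

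You instead pass to the dual. $H$ is non-surjective exactly when some nonzero functional $c$ satisfies $c^{\top}H=0$, and you take a union bound over the $2^\ell-1$ such $c$, each event having probability exactly $2^{-U}$. Both routes land on the identical quantity $(2^\ell-1)2^{-U}$. The paper's argument is in effect a union bound over $\ell$ sequential ``new row falls in the old span'' events with probabilities $2^{j-U}$. Yours is a union bound over $2^\ell-1$ hyperplane events of equal probability.

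Your version is fully self-contained. It needs only that a nonzero combination of independent uniform rows is uniform, which you justify correctly by conditioning on the other rows. The paper instead quotes the standard full-rank product formula. That formula gives the exact value of the non-surjectivity probability, which is more than this lemma requires.
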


\begin{proof}
After fixing bases of \(\F_2^U\) and \(\F_2^\ell\), the map \(H\) is represented
by a uniformly random \(\ell\times U\) binary matrix. The map \(H\) is not
surjective exactly when this matrix has rank less than \(\ell\).

The probability that a random \(\ell\times U\) binary matrix has full row rank
is
\[
 \prod_{j=0}^{\ell-1}(1-2^{j-U}).
\]
Therefore
\[
 \Prb[H\text{ is not surjective}]
 =
 1-\prod_{j=0}^{\ell-1}(1-2^{j-U})\le
 \sum_{j=0}^{\ell-1}2^{j-U}
 =
 2^{-U}(2^\ell-1)
 \le
 2^{\ell-U}.
\]
\end{proof}

\begin{theorem}[Base-optimized maximum-load tail]
\label{thm:main}
There exist absolute constants \(C>0\), \(D>0\), and \(\ell_0\) such that the
following holds. Let \(u\ge\ell\),
 $n:=2^\ell$,
and let \(S\subseteq\F_2^u\) have size \(n\). Let
 $h:\F_2^u\to\F_2^\ell$
be a uniformly random linear map. Then, for every real \(R>1\) satisfying
\[
 R\ell^{1-1/R}\ge D\ln\ell
\]
and every \(\ell\ge\ell_0\),
\[
 \Prb_h\left[
 \M(S,h)
 \ge
 R\frac{\ell}{\log \ell}
 \right]
 \le
 C\frac{(\ln\ell)^2}{R^2\ell^{2-2/R}}.
\]
Equivalently,
\[
 \Prb_h\left[
 \M(S,h)
 \ge
 R\frac{\log  n}{\log \log  n}
 \right]
 \le
 C
 \frac{(\log\log n)^2}
 {R^2(\log n)^{2-2/R}},
\]
for every \(R>1\) satisfying 
\[
 R(\log n)^{1-1/R}\ge D\log\log n,
\]
where changing the base of the logarithms only changes the absolute constants.
\end{theorem}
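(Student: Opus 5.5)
The plan is to derive Theorem~\ref{thm:main} from Proposition~\ref{prop:surjective} by the embedding trick described at the start of this subsection, with Lemma~\ref{lem:rank-deficiency} controlling the failure of surjectivity.

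\textbf{Step 1: embedding.} Fix \(S\subseteq\F_2^u\) with \(|S|=2^\ell\), and fix an integer \(U\ge u\), to be chosen large. Regard \(\F_2^u\) as a coordinate subspace of \(\F_2^U\), so that \(S\subseteq\F_2^U\). Let \(G:\F_2^U\to\F_2^\ell\) be a uniformly random linear map. Its restriction \(G|_{\F_2^u}\) is a uniformly random linear map \(\F_2^u\to\F_2^\ell\): in matrix form it is just a subset of the columns of a uniform matrix. The loads of \(S\) under \(G\) depend only on this restriction. Hence \(\M(S,G)\) has the same distribution as \(\M(S,h)\).

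\textbf{Step 2: conditioning on surjectivity.} Let \(\mathcal E\) be the event that \(G\) is surjective. Conditioned on \(\mathcal E\), the map \(G\) is a uniformly random surjective linear map, by symmetry. Indeed, all surjective matrices are equally likely under the uniform distribution, and conditioning a uniform distribution on an event gives the uniform distribution on that event. Writing \(T=R\ell/\log\ell\), we therefore get
\[
 \Prb[\M(S,h)\ge T]
 \le
 \Prb[\M(S,G)\ge T\mid\mathcal E]+\Prb[\neg\mathcal E]
 \le
 C_0\frac{(\ln\ell)^2}{R^2\ell^{2-2/R}}+2^{\ell-U}.
\]
The first term comes from Proposition~\ref{prop:surjective}, applied with this \(U\ge\ell\), the same \(S\), and \(D=D_0\). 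The second term comes from Lemma~\ref{lem:rank-deficiency}. The hypothesis of the proposition does not depend on \(U\), so it holds for every choice of \(U\).

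\textbf{Step 3: letting \(U\to\infty\).} The left-hand side does not depend on \(U\). Letting \(U\to\infty\), or simply taking \(U\) large enough that \(2^{\ell-U}\) is below the main term, gives the bound with \(C=C_0\) and \(D=D_0\). The equivalent form follows by substituting \(\ell=\log n\). Converting \(\ln\ell\) to \(\log\log n\) changes only the constants \(C\) and \(D\) by factors of \(\ln 2\).

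The only point needing care is the claim in Step 2 that conditioning on surjectivity gives exactly the distribution used in Proposition~\ref{prop:surjective}. That proposition samples a uniformly random \(k\)-dimensional kernel and then an isomorphism of the quotient. The conditional distribution must match this, which I would check by symmetry: \(\mathrm{GL}(U)\) acts on surjective maps by precomposition, transitively, and preserves the uniform distribution on surjective maps. Otherwise the argument is routine.
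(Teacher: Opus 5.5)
Your proposal is correct and follows the paper's proof. The paper also embeds $\F_2^u$ into $\F_2^U$ by zero-padding, uses Lemma~\ref{lem:rank-deficiency} to control non-surjectivity, applies Proposition~\ref{prop:surjective} conditionally on surjectivity, and observes that the restriction to $\F_2^u$ is a uniform linear map. The only cosmetic difference is that you let $U\to\infty$ and get $C=C_0$, $D=D_0$, whereas the paper fixes $U$ with $2^{\ell-U}\le C_0(\ln\ell)^2/(R^2\ell^{2-2/R})$ and absorbs the resulting factor $2$ into $C$.
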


\begin{proof}
Let
\[
 p_{\ell,R}:=C_0\frac{(\ln\ell)^2}{R^2\ell^{2-2/R}},
\]
where \(C_0\) is the constant from Proposition~\ref{prop:surjective}. Choose
 $D\ge \max(D_0,\sqrt{C_0})$.
Then, whenever
 $R\ell^{1-1/R}\ge D\ln\ell$,
we have
\[
 p_{\ell,R}
 =
 C_0\frac{(\ln\ell)^2}{R^2\ell^{2-2/R}}
 =
 C_0\left(\frac{\ln\ell}{R\ell^{1-1/R}}\right)^2
 \le
 \frac{C_0}{D^2}
 \le
 1.
\]

Fix \(u\ge\ell\) and \(S\subseteq\F_2^u\) of size \(2^\ell\). Choose an integer
\(U\ge u\) so large that
 $2^{\ell-U}\le p_{\ell,R}$.
Embed \(\F_2^u\) into \(\F_2^U\) by appending \(U-u\) zero coordinates:
\[
 x=(x_1,\dots,x_u)\mapsto (x_1,\dots,x_u,0,\dots,0).
\]
Let \(V\le \F_2^U\) be the image of this embedding. We then regard
\(S\subseteq\F_2^u\) as a subset of \(V\subseteq\F_2^U\).

Let
 $H:\F_2^U\to\F_2^\ell$
be a uniformly random linear map. By Lemma~\ref{lem:rank-deficiency},
\[
 \Prb[H\text{ is not surjective}]
 \le 2^{\ell-U}\le 
 p_{\ell,R}.
\]
Conditioned on \(H\) being surjective, the map \(H\) is uniformly distributed
among all surjective maps \(\F_2^U\to\F_2^\ell\). Therefore
Proposition~\ref{prop:surjective} gives
\[
 \Prb\left[
 \M(S,H)
 \ge
 R\frac{\ell}{\log \ell}
 \;\middle|\;
 H\text{ is surjective}
 \right]
 \le
 p_{\ell,R}.
\]
Hence
\[
 \Prb_H\left[
 \M(S,H)
 \ge
 R\frac{\ell}{\log \ell}
 \right]
 \le
 2p_{\ell,R}.
\]

Finally, the restriction
 $H|_V:V\to\F_2^\ell$
is a uniformly random linear map from \(V\cong\F_2^u\) to \(\F_2^\ell\). Also,
for the set \(S\subseteq V\),
 $\M(S,H)=\M(S,H|_V)$,
because loads are computed only using points of \(S\). Thus the same bound
holds for a uniformly random linear map
 $h:\F_2^u\to\F_2^\ell$.
Absorbing the factor \(2\) into the absolute constant proves the theorem.
\end{proof}

\subsection{The expectation bound}

The optimized tail can be integrated starting slightly above
\(R=1+\ln\ln\ell/\ln\ell\). We choose the starting point with a fixed additive
constant in the numerator.

\begin{corollary}[Expectation bound to second order]
\label{cor:expectation}
For every \(u\ge\ell\) and every \(S\subseteq\F_2^u\) of size \(2^\ell\),
\[
 \E_h[\M(S,h)]
 \le
 \left(
  1+
  (1+o(1))\frac{\ln\ln\ell}{\ln\ell}
 \right)
 \frac{\ell}{\log\ell}
\]
as \(\ell\to\infty\), where \(h:\F_2^u\to\F_2^\ell\) is a uniformly random
linear map. Equivalently, since \(n=2^\ell\),
\[
 \E_h[\M(S,h)]
 \le
 \left(
  1+
  (1+o(1))\frac{\log\log\log n}{\log\log n}
 \right)
 \frac{\log n}{\log\log n},
\]
where the logarithms in the final display are base \(2\).
\end{corollary}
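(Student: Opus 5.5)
We integrate the tail bound of Theorem~\ref{thm:main}, using the layer-cake formula. Write \(L:=\ell/\log\ell\) and \(X:=\M(S,h)/L\). Then
\[
 \E_h[\M(S,h)] = L\,\E[X] = L\int_0^\infty \Prb[X\ge R]\,dR
 \le L\Bigl(R_0+\int_{R_0}^{\infty}\Prb[X\ge R]\,dR\Bigr)
\]
for any \(R_0\ge 1\). We choose
\[
 R_0:=1+\frac{\ln\ln\ell+c}{\ln\ell},
\]
with \(c\) a suitable absolute constant (or a slowly growing function \(c=c(\ell)\to\infty\) with \(c=o(\ln\ln\ell)\)). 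The goal is to show that the remaining integral is \(o(\ln\ln\ell/\ln\ell)\). If a fixed \(c\) only gives an \(O(1/\ln\ell)\) remainder, that is still \(o(\ln\ln\ell/\ln\ell)\), which is all we need.

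First we check that Theorem~\ref{thm:main} applies on the whole range \(R\ge R_0\). Writing \(R=1+\delta\), we have
\[
 \ell^{1-1/R}=\exp\Bigl(\tfrac{\delta}{1+\delta}\ln\ell\Bigr).
\]
At \(\delta=\delta_0:=(\ln\ln\ell+c)/\ln\ell\), this equals \((1+o(1))\,e^{c}\ln\ell\). The quantity is increasing in \(R\), so the hypothesis \(R\ell^{1-1/R}\ge D\ln\ell\) holds for all \(R\ge R_0\) once \(e^{c}\gtrsim D\) and \(\ell\) is large.

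On this range the tail is at most
\[
 C\Bigl(\frac{\ln\ell}{R\,\ell^{1-1/R}}\Bigr)^2 .
\]
We split the integral at \(R=2\).

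\textbf{Range \(R_0\le R\le 2\).} Here we use
\[
 1-\frac1R=\frac{\delta}{1+\delta}\ge \frac{\delta}{2}.
\]
This gives
\[
 \ell^{-2(1-1/R)}\le \ell^{-\delta}\cdot \ell^{\delta^2/(1+\delta)}\cdots,
\]
or more simply
\[
 \ell^{-2(1-1/R)}=\exp\Bigl(-2\tfrac{\delta}{1+\delta}\ln\ell\Bigr)\le \exp(-\delta\ln\ell).
\]
Hence the integrand is at most \(C(\ln\ell)^2 e^{-\delta\ln\ell}\), and
\[
 \int_{\delta_0}^{1} C(\ln\ell)^2 e^{-\delta\ln\ell}\,d\delta
 \le C\ln\ell\, e^{-\delta_0\ln\ell}
 = C e^{-c}.
\]
This is a problem: it is \(O(1)\), not small, so the crude bound \(1-1/R\ge\delta/2\) loses too much.

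We therefore use the sharper estimate
\[
 2\frac{\delta}{1+\delta}\ge 2\delta-2\delta^2 .
\]
For \(\delta\le 1/\sqrt{\ln\ell}\), the correction factor \(e^{2\delta^2\ln\ell}\) is \(O(1)\). The integral becomes
\[
 C(\ln\ell)^2\int_{\delta_0}^{\ldots} e^{-2\delta\ln\ell}\,d\delta
 \approx \frac{C}{2}\ln\ell\, e^{-2\delta_0\ln\ell}
 = \frac{C}{2}\,\frac{e^{-2c}}{\ln\ell},
\]
which is \(O(1/\ln\ell)=o(\ln\ln\ell/\ln\ell)\). On the remaining part \(\delta\ge 1/\sqrt{\ln\ell}\), the integrand is at most
\[
 (\ln\ell)^2 e^{-\sqrt{\ln\ell}}=o(1/\ln\ell),
\]
so its contribution is negligible too.

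\textbf{Range \(R\ge 2\).} The integrand is at most \(C(\ln\ell)^2/(R^2\ell)\), which integrates to \(O((\ln\ell)^2/\ell)\), again negligible.

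Putting the pieces together,
\[
 \E[X]\le 1+\frac{\ln\ln\ell+O(1)}{\ln\ell}+o\Bigl(\frac{\ln\ln\ell}{\ln\ell}\Bigr),
\]
which is the first claimed bound. The requirement \(\ell\ge\ell_0\) from Theorem~\ref{thm:main} is harmless, since the statement is asymptotic. For the \(n\)-form, substitute \(\ell=\log n\), so that \(\ell/\log\ell=\log n/\log\log n\). Note also that
\[
 \frac{\ln\ln\ell}{\ln\ell}=(1+o(1))\frac{\log\log\ell}{\log\ell},
\]
because the ratio \(\log\log\ell/\ln\ln\ell\) equals \((1+O(1/\ln\ln\ell))/\ln 2\) and this \(1/\ln 2\) factor cancels against the one in \(\log\ell=\ln\ell/\ln 2\).

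The main obstacle is the bookkeeping near \(R=1\), specifically keeping the factor \(2\) in the exponent \(\ell^{-2(1-1/R)}\). That factor is exactly what turns the near-threshold contribution from \(O(1)\) into \(O(1/\ln\ell)\).
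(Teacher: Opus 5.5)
Your proof is correct and follows the paper's route. You use the same starting point $R_0=1+(\ln\ln\ell+c)/\ln\ell$, the same monotonicity check on $R\ell^{1-1/R}$, and the same layer-cake integration, giving an $O(1/\ln\ell)$ remainder. The only difference is how the tail integral is evaluated: the paper substitutes $x=1-1/R$, so $dR/R^2=dx$ and $\int_{R_0}^\infty R^{-2}\ell^{-(2-2/R)}\,dR=\int_{x_0}^1\ell^{-2x}\,dx\le \ell^{-2x_0}/(2\ln\ell)$ in one line. That substitution keeps the factor $2$ in the exponent automatically and makes your three-way split and second-order bound on $\delta/(1+\delta)$ unnecessary.
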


\begin{proof}
Let
\[
 L:=\frac{\ell}{\log\ell}.
\]
Let \(D\) be the constant from Theorem~\ref{thm:main}. Choose a fixed constant
\(F>0\) large enough so that
 $e^F>2D$.
Let \[
 R_0
 :=
 1+
 \frac{\ln\ln\ell+F}{\ln\ell}.
\]
We first check that Theorem~\ref{thm:main} applies for every \(R\ge R_0\), once
\(\ell\) is sufficiently large. We have
\[
 \left(1-\frac1{R_0}\right)\ln\ell
 =
 \frac{\ln\ln\ell+F}{R_0}
 =
 \ln\ln\ell+F+o(1).
\]
Therefore
\begin{eqnarray}\label{HIHO}
    \ell^{1-1/R_0}
 =
 \exp\left(
  \left(1-\frac1{R_0}\right)\ln\ell
 \right)
 =
 e^F(1+o(1))\ln\ell.
\end{eqnarray}
Hence, since \(R_0=1+o(1)\) and \(e^F>2D\),
\[
 R_0\ell^{1-1/R_0}
 =
 e^F(1+o(1))\ln\ell
 \ge
 D\ln\ell
\]
for all sufficiently large \(\ell\). Since the function
 $R\mapsto R\ell^{1-1/R}$
is increasing for \(R>0\), we also have
 $R\ell^{1-1/R}\ge D\ln\ell$
for every \(R\ge R_0\). Thus Theorem~\ref{thm:main} applies throughout the
range \(R\ge R_0\).

Using the tail-integral formula for a nonnegative random variable,
\begin{eqnarray*}
 \E[\M(S,h)]
 &=&
 \int_{0}^{\infty}
 \Prb[\M(S,h)\ge t]\,dt\\
 &=&
 \int_{0}^{R_0L}
 \Prb[\M(S,h)\ge t]\,dt
 +
 \int_{R_0L}^{\infty}
 \Prb[\M(S,h)\ge t]\,dt\\
 &\le&
 R_0L
 +
 \int_{R_0L}^{\infty}
 \Prb[\M(S,h)\ge t]\,dt\\
 &=&
 R_0L
 +
 L\int_{R_0}^{\infty}
 \Prb[\M(S,h)\ge RL]\,dR.
\end{eqnarray*}
By Theorem~\ref{thm:main},
\[
 \E[\M(S,h)]
 \le
 R_0L
 +
 CL\int_{R_0}^{\infty}
 \frac{(\ln\ell)^2}{R^2\ell^{2-2/R}}\,dR.
\]
Set
 $x=1-1/R$.
Then \(dx=dR/R^2\), and therefore
\[
 \int_{R_0}^{\infty}
 \frac{(\ln\ell)^2}{R^2\ell^{2-2/R}}\,dR
 =
 (\ln\ell)^2
 \int_{x_0}^{1}
 \ell^{-2x}\,dx,
\]
where
 $x_0=1-1/{R_0}$.
Thus, by (\ref{HIHO}),
\begin{eqnarray*}
 (\ln\ell)^2
 \int_{x_0}^{1}
 \ell^{-2x}\,dx
 &\le&
 \frac{\ln\ell}{2}\ell^{-2x_0}=
 \frac{\ln\ell}{2}\ell^{-2(1-1/R_0)}\\
 &=&
 \frac12 e^{-2F+o(1)}(\ln\ell)^{-1}\\
 &=&
 o\left(\frac{\ln\ln\ell}{\ln\ell}\right).
\end{eqnarray*}
It follows that
\[
 \E[\M(S,h)]
 \le
 \left(
  R_0+
  o\left(\frac{\ln\ln\ell}{\ln\ell}\right)
 \right)L.
\]
Substituting the definition of \(R_0\), we obtain
\[
 \E[\M(S,h)]
 \le
 \left(
  1+
  \frac{\ln\ln\ell+F}{\ln\ell}
  +
  o\left(\frac{\ln\ln\ell}{\ln\ell}\right)
 \right)
 \frac{\ell}{\log\ell}
=
 \left(
  1+
  (1+o(1))\frac{\ln\ln\ell}{\ln\ell}
 \right)
 \frac{\ell}{\log\ell}.
\]
Finally, since \(n=2^\ell\), we have \(\ell=\log n\) and
\(\log\ell=\log\log n\). Also,
\[
 \frac{\ln\ln\ell}{\ln\ell}
 =
 \left(1+o(1)\right)
 \frac{\log\log\log n}{\log\log n}.
\]
This gives the equivalent \(n\)-form.
\end{proof}

\subsection{Comparison of the two tail mechanisms}

The fixed-bucket and potential arguments apply in different regimes.
Corollary~\ref{cor:dyadic-fixed-bucket} gives, for a prescribed bucket,
\[
 \Prb[\Load_h(y)>2^a-2]
 \le
 \gamma^{-1}2^{-a^2}.
\]
This estimate is sharp for a single bucket. However, controlling the maximum
load by a direct union bound gives
\[
 \Prb[\M(S,h)>2^a-2]
 \le
 \gamma^{-1}2^{\ell-a^2}.
\]
This becomes nontrivial only when \(a\gtrsim\sqrt{\ell}\), that is, only for
loads at least about \(2^{\sqrt{\ell}}\). Thus the fixed-bucket estimate is a
sharp far-tail result, but it does not by itself reach the expectation scale.

By contrast, the optimized potential tail in Theorem~\ref{thm:main} controls
loads at the expectation scale
 ${\ell}/{\log \ell}$.
The improvement over the \(O(R^{-2})\) tail bound of Jaber--Kumar--Zuckerman
comes from reducing the potential base. To detect a bin of load
\[
 R\frac{\ell}{\log \ell},
\]
we take
\[
 b=\ell^{1/R+1/\ln\ell}=e\ell^{1/R}.
\]
The initial potential satisfies
\[
 \Phi_0-1=\frac{b-1}{2^k}.
\]
With the original choice \(b\approx\ell\), this is of order \(\ell/2^k\).
With the optimized choice \(b=e\ell^{1/R}\), it is instead of order
\(\ell^{1/R}/2^k\). Thus the initial potential is smaller by a factor
\[
 \ell^{-1+1/R}.
\]
The quadratic tail lemma bounds the failure probability by a constant times the
square of the normalized initial potential. Hence this saving is squared,
giving the factor
\[
 \ell^{-2+2/R}.
\]

The condition
\[
 R\ell^{1-1/R}\ge D\ln\ell
\]
is the condition that the chosen base is large enough to detect the threshold
while keeping the initial potential small enough for the quadratic lemma. The
threshold for this condition occurs when \(R\) is just above
\[
 1+\frac{\ln\ln\ell}{\ln\ell}.
\]
More precisely, if \(F\) is a sufficiently large absolute constant and
\[
 R_0=1+\frac{\ln\ln\ell+F}{\ln\ell},
\]
then
\[
 R_0\ell^{1-1/R_0}
 =
 e^F(1+o(1))\ln\ell
 \ge
 D\ln\ell
\]
for all sufficiently large \(\ell\). Hence Theorem~\ref{thm:main} applies for
all \(R\ge R_0\). Integrating the tail from this point gives
Corollary~\ref{cor:expectation}, namely
\[
 \E_h[\M(S,h)]
 \le
 \left(
 1+(1+o(1))\frac{\ln\ln\ell}{\ln\ell}
 \right)
 \frac{\ell}{\log\ell}.
\]

Thus the two estimates are complementary. The fixed-bucket argument gives a
nearly sharp tail for one prescribed bucket, including the correct
\(2^{-a^2}\) behavior. The potential argument is weaker for a single bucket but
is global: it controls the maximum over all buckets at the scale relevant for
the expected maximum load.

\section{Conclusion}

This note revisits the potential method of Jaber--Kumar--Zuckerman and shows
that the choice of base can be tuned to the target load level. This simple
optimization strengthens the maximum-load tail from the \(O(R^{-2})\) bound to
\[
 \Prb\left[
 \M(S,h)
 \ge
 R\frac{\ell}{\log \ell}
 \right]
 \le
 C\frac{(\ln\ell)^2}{R^2\ell^{2-2/R}}
\]
for every admissible \(R>1\). Integrating the estimate from
\[
 R_0=1+\frac{\ln\ln\ell+O(1)}{\ln\ell}
\]
gives
\[
 \E[\M(S,h)]
 \le
 \left(
 1+
 (1+o(1))
 \frac{\ln\ln\ell}{\ln\ell}
 \right)
 \frac{\ell}{\log \ell}.
\]
Equivalently, since \(n=2^\ell\),
\[
 \E[\M(S,h)]
 \le
 \frac{\log n}{\log\log n}
 +
 (1+o(1))
 \frac{\log n\cdot\log\log\log n}{(\log\log n)^2}.
\]
Thus binary linear hashing matches the fully independent maximum-load scale not
only in the leading term, but also in the dominant second-order term.

We also proved a separate fixed-bucket estimate. For one prescribed bucket,
the load has tail
\[
 \Prb[\Load_h(y)>2^a-2]\le \gamma^{-1}2^{-a^2},
\]
and a subspace construction shows that this is asymptotically tight even in the
leading constant as \(a\to\infty\). This fixed-bucket result is sharp, but it is
a far-tail statement: a direct union bound over all buckets loses a factor
\(2^\ell\), so the potential method remains essential for controlling the
maximum load at the expectation scale.

The same optimization also extends to \(m\) keys and \(n\) bins. In the regime
where the fully independent maximum-load scale \(t\) is much larger than the
average load \(m/n\), the argument gives
\[
 \E[\M(S,h)]\le (1+o(1))t.
\]
The dense two-sided balancing regime is different: there one must control both
large and small buckets at the scale of the average load. The methods here
improve the sparse maximum-load tail, but they do not improve the dense
two-sided bounds.

\newpage
\appendix

\section{Potential lemmas}\label{ApA}

In this appendix we include self-contained proofs of the potential lemmas used
in the main text. The first lemma packages the potential-evolution estimates of
Jaber--Kumar--Zuckerman in the normalization used here. The second lemma is a
normalized quadratic tail lemma.

\subsection{Potential evolution}

\begin{proof}[Proof of Lemma~\ref{lem:potential-evolution}]
Fix \(0\le i<k\), and condition on the subspace \(V_i\). Let
\[
 G_i:=\F_2^U/V_i,
 \qquad
 N_i:=|G_i|=2^{U-i}.
\]
Since \(V_{i+1}\) is obtained from \(V_i\) by adjoining one new vector outside
\(V_i\), let \(w\in G_i\setminus\{0\}\) denote the image of that new vector in
the quotient \(G_i=\F_2^U/V_i\). Conditional on \(V_i\), this \(w\) is uniformly
distributed over \(G_i\setminus\{0\}\).

Define
 $f_i:G_i\to\mathbb Z_{\ge0}$
by
\[
 f_i(C):=|C\cap S|.
\]
Thus
\[
 \Phi_i=\E_{C\in G_i}\left[b^{f_i(C)}\right].
\]

Let
\[
 \pi_i:G_i\to G_{i+1}
\]
be the quotient map induced by \(V_i\le V_{i+1}\). Passing from \(V_i\) to
\(V_{i+1}\) merges the two \(V_i\)-cosets \(C\) and \(C+w\) into the same
\(V_{i+1}\)-coset \(\pi_i(C)\). Define
 $f_{i+1}:G_{i+1}\to \mathbb Z_{\ge0}$
by
\[
 f_{i+1}(D):=|D\cap S|.
\]
Then, for every \(C\in G_i\),\footnote{Here we identify a coset with the corresponding subset of
\(\F_2^U\). Thus \(\pi_i(C)\), which is formally a \(V_{i+1}\)-coset, is the
union of the two \(V_i\)-cosets \(C\) and \(C+w\), namely
$\pi_i(C)=C\cup(C+w)$.
}
\[
 f_{i+1}(\pi_i(C))
 =
 |\pi_i(C)\cap S|
 =
 f_i(C)+f_i(C+w).
\]
Therefore, for the choice of new vector \(w\),
\[
 \Phi_{i+1}(w)
 =
 \E_{D\in G_{i+1}}\left[b^{f_{i+1}(D)}\right].
\]
Since every \(D\in G_{i+1}\) has exactly two preimages in \(G_i\), namely
\(C\) and \(C+w\), we may equivalently average over \(C\in G_i\). Hence
\[
 \Phi_{i+1}(w)
 =
 \E_{C\in G_i}
 \left[
 b^{f_i(C)+f_i(C+w)}
 \right].
\]

Let
 $a_C:=b^{f_i(C)}$.
Then
 $\Phi_i=\E_{C\in G_i}[a_C]$.
Averaging over the uniformly random choice of \(w\in G_i\setminus\{0\}\), we get
\[
 \E_w[\Phi_{i+1}(w)]
 =
 \E_{w\neq0}\E_{C\in G_i}[a_Ca_{C+w}].
\]
Expanding this average gives
\[
 \E_{w\neq0}\E_{C\in G_i}[a_Ca_{C+w}]
 =
 \frac1{N_i(N_i-1)}
 \sum_{C\in G_i}
 \sum_{\substack{w\in G_i\\w\neq0}}
 a_Ca_{C+w}.
\]
For fixed \(C\), as \(w\) ranges over \(G_i\setminus\{0\}\), the coset \(C+w\)
ranges over \(G_i\setminus\{C\}\). Therefore
\[
 \sum_{C\in G_i}
 \sum_{\substack{w\in G_i\\w\neq0}}
 a_Ca_{C+w}
 =
 \sum_{C\in G_i}a_C
 \sum_{\substack{D\in G_i\\D\neq C}}a_D
 =
 \left(\sum_{C\in G_i}a_C\right)^2
 -
 \sum_{C\in G_i}a_C^2.
\]
Hence
\[
\begin{aligned}
 \E_w[\Phi_{i+1}(w)]
 &=
 \frac{1}{N_i(N_i-1)}
 \left[
 \left(\sum_{C\in G_i}a_C\right)^2
 -
 \sum_{C\in G_i}a_C^2
 \right]  \\
 &=
 \frac{1}{N_i(N_i-1)}
 \left[
 N_i^2\Phi_i^2
 -
 N_i\E_{C\in G_i}[a_C^2]
 \right] \\
 &=
 \frac{N_i\Phi_i^2-\E_{C\in G_i}[a_C^2]}{N_i-1}.
\end{aligned}
\]
By Jensen's inequality,
 $\E_{C\in G_i}[a_C^2]
 \ge
 \left(\E_{C\in G_i}[a_C]\right)^2
 =
 \Phi_i^2$.
Therefore
\[
 \E_w[\Phi_{i+1}(w)]
 \le
 \frac{N_i\Phi_i^2-\Phi_i^2}{N_i-1}
 =
 \Phi_i^2.
\]
Equivalently,
 $\E[\Phi_{i+1}\mid V_i]\le \Phi_i^2$.
Since \(\Phi_0,\dots,\Phi_i\) are determined once \(V_i\) is known, the tower
property gives
\[
 \E[\Phi_{i+1}\mid \Phi_0,\dots,\Phi_i]
 =
 \E\left[
  \E[\Phi_{i+1}\mid V_i]
  \,\middle|\,
  \Phi_0,\dots,\Phi_i
 \right]
 \le
 \Phi_i^2.
\]

It remains to prove the deterministic growth estimate. For every \(C\in G_i\),
since \(b>1\), both
\[
 b^{f_i(C)}-1
 \qquad\text{and}\qquad
 b^{f_i(C+w)}-1
\]
are nonnegative. Therefore
\[
 b^{f_i(C)+f_i(C+w)}-1
 =
 b^{f_i(C)}b^{f_i(C+w)}-1
 \ge
 \bigl(b^{f_i(C)}-1\bigr)
 +
 \bigl(b^{f_i(C+w)}-1\bigr).
\]
Averaging over \(C\in G_i\), we obtain
\[
 \Phi_{i+1}-1
 \ge
 \E_{C\in G_i}\bigl[b^{f_i(C)}-1\bigr]
 +
 \E_{C\in G_i}\bigl[b^{f_i(C+w)}-1\bigr].
\]
Since \(C+w\) is uniformly distributed over \(G_i\) when \(C\) is uniformly
distributed over \(G_i\), the two averages are equal. Hence
\[
 \Phi_{i+1}-1
 \ge
 2\E_{C\in G_i}\bigl[b^{f_i(C)}-1\bigr]
 =
 2(\Phi_i-1).
\]
This completes the proof.
\end{proof}

\subsection{Strengthened quadratic potential tail lemma}

We next prove Lemma~\ref{lem:quadratic-tail}. This is the strengthened
quadratic tail lemma used in the main text; in particular, it does not require
any upper bound on \(\tau\).

We need one elementary one-step estimate.

\begin{lemma}[One-step estimate]
\label{lem:one-step-estimate}
Let \(s>0\), \(0\le r\le1/4\), and let
 $d:=s(2+s)$.

Let \(Z\) be a nonnegative random variable satisfying
 $Z\ge 2rs$
and
 $\E [Z]\le 2rs+r^2s^2$.
Define
 $F(u):=\min\{1,48u^2\}$.
Then
\[
 \E \left[F\left(\frac{Z}{d}\right)\right]\le 48r^2.
\]
\end{lemma}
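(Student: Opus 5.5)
The plan is to linearize $F$ on the range of $Z/d$. Then only the first-moment hypothesis $\E[Z]\le 2rs+r^2s^2$ is needed, together with the pointwise lower bound $Z\ge 2rs$. Set
\[
 u_0:=\frac{2rs}{d}=\frac{2r}{2+s},\qquad u_1:=\frac{1}{\sqrt{48}},
\]
so that $F(u)=48u^2$ for $u\le u_1$ and $F(u)=1$ for $u\ge u_1$. Since $Z\ge 2rs$, the variable $Z/d$ always lies in $[u_0,\infty)$.

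\emph{Trivial case $u_0\ge u_1$.} Then $r\ge u_0\ge u_1$, so $48r^2\ge 48u_1^2=1$. Since $F\le 1$ everywhere, the bound holds immediately.

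\emph{Main case $u_0<u_1$.} I would take the chord line through $(u_0,48u_0^2)$ and $(u_1,1)$:
\[
 \lambda(u):=48u_0^2+48(u_0+u_1)(u-u_0).
\]
On $[u_0,u_1]$ we have $\lambda\ge 48u^2=F$, because the chord of a convex function lies above it. On $[u_1,\infty)$ we have $\lambda\ge \lambda(u_1)=1=F$, since $\lambda$ is increasing. Hence $F(Z/d)\le \lambda(Z/d)$ pointwise. Taking expectations and using $\E[Z]-2rs\le r^2s^2$ gives
\[
 \E\left[F\left(\frac{Z}{d}\right)\right]\le 48u_0^2+48(u_0+u_1)\frac{r^2s^2}{s(2+s)}
 =48r^2\left[\frac{4}{(2+s)^2}+(u_0+u_1)\frac{s}{2+s}\right].
\]

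It remains to show that the bracket is at most $1$. Put $t:=s/(2+s)\in(0,1)$, so that $2/(2+s)=1-t$ and the bracket equals $(1-t)^2+(u_0+u_1)t$. This is at most $1$ exactly when $u_0+u_1\le 2-t$. That holds because $u_0+u_1<2u_1<1<2-t$, using $u_0<u_1$ in this case.

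The only delicate point is choosing the majorizing line. A tangent line at $u_0$ does not dominate the cap at $1$, so one has to use the chord to $(u_1,1)$ and handle separately the case $u_0\ge u_1$, where no chord exists. Once that choice is made, the remaining verification is the one-variable inequality above. The hypothesis $r\le 1/4$ is not actually needed in this argument.
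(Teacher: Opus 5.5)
Your proof is correct and follows the paper's route: you majorize $F$ on $[u_0,\infty)$ by an affine function anchored at $(u_0,F(u_0))$, take expectations using only $\E[Z]-2rs\le r^2s^2$, and split cases at $u_0=1/\sqrt{48}$. The differences are minor. The paper uses the global Lipschitz slope $8\sqrt3=96u_1$ where you use the smaller chord slope $48(u_0+u_1)$. In the case $u_0>1/\sqrt{48}$ the paper runs a computation invoking $r\le 1/4$, whereas your observation $r\ge u_0>1/\sqrt{48}$ makes that case trivial and shows the hypothesis $r\le 1/4$ is superfluous.
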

\begin{proof}
The function
 $F(u)=\min\{1,48u^2\}$
is increasing and globally Lipschitz on \([0,\infty)\) with Lipschitz constant
\(8\sqrt3\). Indeed, on the quadratic part its derivative is \(96u\), and the
quadratic part ends at \(u=1/\sqrt{48}\), where the derivative equals
\(8\sqrt3\).

Since \(Z\ge 2rs\), we have
\[
 F\left(\frac{Z}{d}\right)
 \le
 F\left(\frac{2rs}{d}\right)
 +
 \frac{8\sqrt3}{d}(Z-2rs).
\]
Taking expectations and using
 $\E(Z-2rs)\le r^2s^2$,
we get
\[
 \E \left[F\left(\frac{Z}{d}\right)\right]
 \le
 F\left(\frac{2rs}{s(2+s)}\right)
 +
 \frac{8\sqrt3}{s(2+s)}r^2s^2
 =
 F\left(\frac{2r}{2+s}\right)
 +
 \frac{8\sqrt3\,r^2s}{2+s}.
\]

We now show that the right-hand side is at most \(48r^2\).

First suppose
\[
 \frac{2r}{2+s}\le \frac1{\sqrt{48}}.
\]
Then
\[
 F\left(\frac{2r}{2+s}\right)
 =
 48\left(\frac{2r}{2+s}\right)^2.
\]
It is enough to prove
\[
 48\frac{4r^2}{(2+s)^2}
 +
 \frac{8\sqrt3\,r^2s}{2+s}
 \le
 48r^2.
\]
After dividing by \(48r^2\), this becomes
\[
 \frac4{(2+s)^2}
 +
 \frac{\sqrt3}{6}\frac{s}{2+s}
 \le
 1.
\]
Since
\[
 1-\frac4{(2+s)^2}
 =
 \frac{(2+s)^2-4}{(2+s)^2}
 =
 \frac{s(4+s)}{(2+s)^2},
\]
it is enough to show
\[
 \frac{s(4+s)}{(2+s)^2}
 \ge
 \frac{\sqrt3}{6}\frac{s}{2+s}.
\]
Since \(s>0\), this is equivalent to
\[
 \frac{4+s}{2+s}
 \ge
 \frac{\sqrt3}{6}.
\]
Since
 ${(4+s)}/{(2+s)}>1>{\sqrt3}/{6}$,
the desired inequality follows.

Now suppose
\[
 \frac{2r}{2+s}> \frac1{\sqrt{48}}.
\]
Then
\[
 F\left(\frac{2r}{2+s}\right)=1.
\]
The assumption implies
\[
 48r^2-1>\frac{(2+s)^2}{4}-1
 =
 s+\frac{s^2}{4}.
\]
Since \(r\le 1/4\), we also have
\[
 s+\frac{s^2}{4}
 \ge
 \frac{\sqrt3}{4}s
 \ge
 \frac{\sqrt3}{2}\frac{s}{2+s}
 \ge
 \frac{8\sqrt3\,r^2s}{2+s}.
\]
Therefore
\[
 1+\frac{8\sqrt3\,r^2s}{2+s}
 \le
 48r^2.
\]
This proves the estimate in both cases.
\end{proof}

We now prove the strengthened quadratic tail lemma.

\begin{proof}[Proof of Lemma~~\ref{lem:quadratic-tail}]
We prove the lemma by induction on \(k\). Write
 $x:=X_0$.
The case \(k=0\) is immediate. Indeed, $\tau\ge 1+4(x-1)=x+3(x-1)>x$
and so
 $\Prb[X_0\ge\tau]=0$.

Assume the result is known for sequences of length \(k-1\), and prove it for
length \(k\). Let
$y=x-1$,
and
 $s=\tau-1$.
Then
 $0< s$,
 and
 $0\le y=x-1\le (\tau-1)/4={s}/{4}$.
Define
 $r:={y}/{s}$.
Thus \(0\le r=(x-1)/(\tau-1)\le1/4\). Also let
 $d:=\tau^2-1=s(2+s)$.

Let
\[
 Z:=X_1-1.
\]
Since $X_1-1\ge 2(X_0-1)$,
\[
 Z\ge 2y=2rs.
\]
The conditional expectation hypothesis gives
\[
 \E [Z]
 =
 \E[X_1-1]
 \le
 x^2-1
 =
 (1+y)^2-1
 =
 2y+y^2
 =
 2rs+r^2s^2.
\]

Now condition on \(X_1\). If
\[
 1+4(X_1-1)\le \tau^2,
\]
then the induction hypothesis, applied to the remaining sequence
\(X_1,X_2,\dots,X_k\) with threshold \(\tau^2\), gives
\[
 \Prb\left[
 X_k\ge (\tau^2)^{2^{k-1}}
 \,\middle|\,
 X_1
 \right]
 \le
 48\left(\frac{X_1-1}{\tau^2-1}\right)^2.
\]
If instead
\[
 1+4(X_1-1)> \tau^2,
\]
then, $48((X_1-1)/(\tau^2-1))^2>48/16>1$, and we use the trivial bound by \(1\). We have, in all cases,
\[
 \Prb\left[
 X_k\ge \tau^{2^k}
 \,\middle|\,
 X_1
 \right]
 \le
 F\left(\frac{X_1-1}{\tau^2-1}\right),
\]
where
\[
 F(u):=\min\{1,48u^2\}.
\]
Taking expectations,
\[
 \Prb[X_k\ge \tau^{2^k}]
 \le
 \E \left[F\left(\frac{Z}{d}\right)\right].
\]
By Lemma~\ref{lem:one-step-estimate},
\[
 \E \left[F\left(\frac{Z}{d}\right)\right]
 \le
 48r^2
 =
 48\left(\frac{y}{s}\right)^2
 =
 48\left(\frac{x-1}{\tau-1}\right)^2=
 48\left(\frac{X_0-1}{\tau-1}\right)^2.
\]
This completes the induction.
\end{proof}

\section{A fixed-bucket tail bound for \(m\) keys and \(n\) bins}
\label{app:fixed-bucket-m-n}

In this appendix we record the fixed-bucket version of the tail estimate for
\(m\) keys and \(n\) bins. Since binary linear hashing naturally gives a
power-of-two number of bins, write
 $n:=2^\ell$.
Let
\[
 S=\{u_1,\dots,u_m\}\subseteq \F_2^d\setminus\{0\}
\]
be a set of distinct nonzero vectors, and let
 $h:\F_2^d\to \F_2^\ell$
be a uniformly random linear map. For a fixed bucket \(y\in\F_2^\ell\), define
\[
 Z_y:=|\{i:h(u_i)=y\}|.
\]
Let
\[
 \lambda:=\frac mn.
\]

We will use again Lemma~\ref{lem:independent-tuples}: if
\(A\subseteq\F_2^d\setminus\{0\}\) has size \(q\), and
 $a=\left\lceil\log(q+1)\right\rceil$,
then \(A\) contains at least
\[
 \prod_{j=0}^{a-1}(q+1-2^j)
\]
ordered linearly independent \(a\)-tuples. 

\begin{theorem}[Fixed-bucket tail for \(m\) keys and \(n\) bins]
\label{thm:fixed-bucket-m-n}
Let \(n=2^\ell\), and let
\[
 S=\{u_1,\dots,u_m\}\subseteq \F_2^d\setminus\{0\}
\]
be a set of distinct nonzero vectors. Let
 $h:\F_2^d\to \F_2^\ell$
be a uniformly random linear map. Fix \(y\in\F_2^\ell\), and define
 $Z_y:=|\{i:h(u_i)=y\}|$.
Let
 $\lambda= m/n$.
Then, for every integer \(r\ge0\), if
\[
 a:=\left\lceil \log (r+2)\right\rceil,
\]
we have
\[
 \Prb[Z_y>r]
 \le
 \lambda^a
 \left(\prod_{j=0}^{a-1}(r+2-2^j)\right)^{-1}.
\]
\end{theorem}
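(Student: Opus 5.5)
We follow the proof of Theorem~\ref{thm:fixed-bucket} verbatim; the only change is in counting the independent tuples in $S$, which now produces the factor $\lambda^a$ instead of $1$. Set $q:=r+1$ and $a:=\lceil\log(q+1)\rceil=\lceil\log(r+2)\rceil$. Let $\mathcal I_a$ be the set of ordered $a$-tuples of indices $(i_1,\dots,i_a)$ for which $u_{i_1},\dots,u_{i_a}$ are linearly independent. Define the counting variable
\[
 T_a(h):=\sum_{(i_1,\dots,i_a)\in\mathcal I_a}\mathbf 1[h(u_{i_1})=\cdots=h(u_{i_a})=y].
\]

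First, we compute the expectation. For a fixed linearly independent tuple, the images $h(u_{i_1}),\dots,h(u_{i_a})$ are independent and uniform in $\F_2^\ell$, as in the footnote to Theorem~\ref{thm:fixed-bucket}. So each indicator has expectation $2^{-\ell a}=n^{-a}$. The total number of ordered $a$-tuples from $S$ is $m^a$, so $|\mathcal I_a|\le m^a$. Hence
\[
 \E_h[T_a(h)]\le m^a n^{-a}=\lambda^a.
\]
This is the only place where the argument differs from the balanced case. There we had $|U|=2^m$ and a codomain of dimension $m$, so the bound was exactly $1$.

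Second, suppose $Z_y\ge q$. Then $A:=\{u_i:h(u_i)=y\}$ is a set of at least $q$ distinct nonzero vectors in $\F_2^d$. Lemma~\ref{lem:independent-tuples}, applied to any $q$-element subset of $A$, gives
\[
 T_a(h)\ge M(q):=\prod_{j=0}^{a-1}(q+1-2^j)=\prod_{j=0}^{a-1}(r+2-2^j).
\]
Every factor is positive because $2^j\le 2^{a-1}<q+1$. Markov's inequality then gives $\Prb[Z_y>r]=\Prb[Z_y\ge q]\le \lambda^a/M(q)$, which is the claim.

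I do not expect a real obstacle. The two points to watch are small. Lemma~\ref{lem:independent-tuples} is stated for sets of size exactly $q$, so we pass to a $q$-subset of $A$; this is monotone, since tuples from the subset are tuples from $A$. The independence of the images requires only that the tuple be linearly independent, and it does not depend on the relation between $d$ and $\ell$. One might want a refinement when $\lambda<1$, but the statement asks only for $\lambda^a$, so the crude bound $|\mathcal I_a|\le m^a$ suffices.
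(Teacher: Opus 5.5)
Your proposal is correct and follows essentially the same argument as the paper: count the ordered linearly independent $a$-tuples that land in bucket $y$, bound their expected number by $|\mathcal I_a|\,n^{-a}\le\lambda^a$, lower-bound the count on the event $Z_y\ge r+1$ by $\prod_{j=0}^{a-1}(r+2-2^j)$ via Lemma~\ref{lem:independent-tuples}, and apply Markov's inequality. Passing to an $(r+1)$-element subset of the bucket is a bit of extra care that the paper leaves implicit, since that lemma is stated for a set of size exactly $q$.
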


\begin{proof}
Let
 $q=r+1$,
and
 $a=\left\lceil \log (q+1)\right\rceil$.
Let \(\mathcal I_a\) be the set of ordered \(a\)-tuples
 $(i_1,\dots,i_a)$
such that
 $u_{i_1},\dots,u_{i_a}$
are linearly independent. For a linear map \(h\), define
\[
 T_a(h)
 :=
 \sum_{(i_1,\dots,i_a)\in\mathcal I_a}
 \mathbf 1[h(u_{i_1})=\cdots=h(u_{i_a})=y].
\]
Thus \(T_a(h)\) counts ordered linearly independent \(a\)-tuples from \(S\)
that all land in the prescribed bucket \(y\).

For a fixed ordered linearly independent \(a\)-tuple
\((u_{i_1},\dots,u_{i_a})\), the random vectors
 $h(u_{i_1}),\dots$ $,h(u_{i_a})$
are independent and uniformly distributed in \(\F_2^\ell\). Hence
\[
 \Prb_h[h(u_{i_1})=\cdots=h(u_{i_a})=y]
 =
 2^{-\ell a}
 =
 n^{-a}.
\]
Therefore, by linearity of expectation,
\[
 \E_h[T_a(h)]
 =
 |\mathcal I_a|n^{-a}.
\]
Since
 $|\mathcal I_a|\le m^a$,
we get
\[
 \E_h[T_a(h)]
 \le
 \left(\frac mn\right)^a
 =
 \lambda^a.
\]

Now suppose \(Z_y\ge q\). Then the set
 $A=\{u_i:h(u_i)=y\}$
has size at least \(q\). By Lemma~\ref{lem:independent-tuples}, \(A\)
contains at least
\[
 M(q):=\prod_{j=0}^{a-1}(q+1-2^j)
\]
ordered linearly independent \(a\)-tuples. Thus
\[
 Z_y\ge q
 \quad\Longrightarrow\quad
 T_a(h)\ge M(q).
\]
By Markov's inequality,
\[
 \Prb[Z_y\ge q]
 \le
 \Prb[T_a(h)\ge M(q)]
 \le
 \frac{\E_h[T_a(h)]}{M(q)}
 \le
 \frac{\lambda^a}{M(q)}.
\]
Since \(q=r+1\), this gives
\[
 \Prb[Z_y>r]
 \le
 \lambda^a
 \left(\prod_{j=0}^{a-1}(r+2-2^j)\right)^{-1}.
\]
\end{proof}

Let
\[
 \gamma:=\prod_{j=1}^{\infty}(1-2^{-j}).
\]
The preceding theorem has the following clean form at dyadic thresholds.

\begin{corollary}[Dyadic fixed-bucket tail for \(m\) keys and \(n\) bins]
\label{cor:dyadic-fixed-bucket-m-n}
Under the assumptions of Theorem~\ref{thm:fixed-bucket-m-n}, for every integer
\(a\ge1\),
\[
 \Prb[Z_y>2^a-2]
 \le
 \gamma^{-1}\lambda^a2^{-a^2}.
\]
\end{corollary}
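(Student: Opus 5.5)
The plan is to mirror the proof of Corollary~\ref{cor:dyadic-fixed-bucket}, this time starting from Theorem~\ref{thm:fixed-bucket-m-n}. Fix an integer \(a\ge1\) and set \(r:=2^a-2\), so that \(r\ge0\). The first step is to check that the exponent in Theorem~\ref{thm:fixed-bucket-m-n} is the same \(a\). Since \(r+2=2^a\), we have \(\lceil\log(r+2)\rceil=\lceil a\rceil=a\). Therefore Theorem~\ref{thm:fixed-bucket-m-n} gives
\[
 \Prb[Z_y>2^a-2]\le \lambda^a\left(\prod_{j=0}^{a-1}(2^a-2^j)\right)^{-1}.
\]

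The second step evaluates the product in the same way as before. We have
\[
 \prod_{j=0}^{a-1}(2^a-2^j)=2^{a^2}\prod_{j=0}^{a-1}(1-2^{j-a})=2^{a^2}\prod_{s=1}^{a}(1-2^{-s}).
\]
Each factor \(1-2^{-s}\) lies in \((0,1]\). So the finite product \(\prod_{s=1}^{a}(1-2^{-s})\) is at least the infinite product \(\gamma\), which is positive. This gives the lower bound \(\prod_{j=0}^{a-1}(r+2-2^j)\ge\gamma\,2^{a^2}\), and inverting it yields \(\Prb[Z_y>2^a-2]\le\gamma^{-1}\lambda^a2^{-a^2}\).

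I expect no real obstacle here. The only points to confirm are that \(r=2^a-2\) is a legitimate nonnegative integer input and that the ceiling returns exactly \(a\), so the factor \(\lambda^a\) carries the intended exponent. Everything else is the identical product computation from the balanced case, with the extra factor \(\lambda^a\) passed through unchanged.
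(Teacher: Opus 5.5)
Your proposal is correct and follows the paper's proof: apply Theorem~\ref{thm:fixed-bucket-m-n} with \(r=2^a-2\), note that \(\lceil\log(r+2)\rceil=a\), and use \(\prod_{j=0}^{a-1}(2^a-2^j)=2^{a^2}\prod_{s=1}^{a}(1-2^{-s})\ge\gamma\,2^{a^2}\). You also spell out the small checks that the paper leaves implicit, namely \(r\ge0\), \(\gamma>0\), and that each factor \(1-2^{-s}\) lies in \((0,1]\).
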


\begin{proof}
Apply Theorem~\ref{thm:fixed-bucket-m-n} with \(r=2^a-2\). Then
 $\left\lceil \log (r+2)\right\rceil=a$,
and
\[
\begin{aligned}
 \prod_{j=0}^{a-1}(r+2-2^j)
 &=
 \prod_{j=0}^{a-1}(2^a-2^j)  \\
 &\ge
 \gamma 2^{a^2}.
\end{aligned}
\]
Therefore
\[
 \Prb[Z_y>2^a-2]
 \le
 \lambda^a\gamma^{-1}2^{-a^2}.
\]
\end{proof}

\begin{remark}[Including the zero vector]
The theorem above assumes that all keys are nonzero. If \(0\in S\), then for
\(y\neq0\), the zero vector never contributes to \(Z_y\), so one may apply the
theorem to \(S\setminus\{0\}\). For \(y=0\), the zero vector contributes
deterministically one point to the load. Thus the same bound applies to
\(Z_0-1\) after removing the zero vector from \(S\).
\end{remark}

\begin{remark}[Comparison with the balanced case]
When \(m=n\), we have \(\lambda=1\), and Corollary~\ref{cor:dyadic-fixed-bucket-m-n}
becomes
\[
 \Prb[Z_y>2^a-2]\le \gamma^{-1}2^{-a^2},
\]
which is the fixed-bucket estimate used in the balanced case.
For general \(m\), the additional factor
\[
 \lambda^a=\left(\frac mn\right)^a
\]
reflects the average load of the prescribed bucket.
\end{remark}

\begin{proposition}[Matching lower bound for the fixed-bucket tail]
\label{prop:fixed-bucket-m-n-lower}
Let \(n=2^\ell\), let \(m=2^d\), and let
 $\lambda=m/n=2^{d-\ell}$.
Let
 $d\ge a\ge \max\{1,d-\ell\}$.
Then there exists a set
 $S\subseteq \F_2^N\setminus\{0\}$
of \(m\) distinct nonzero vectors such that, for a uniformly random linear map
 $h:\F_2^N\to \F_2^\ell$,
one has
\[
 \Prb\left[
 |\{x\in S:h(x)=0\}|>2^a-2
 \right]
 \ge
 \gamma^2\lambda^a2^{-a^2},
\]
where
 $\gamma:=\prod_{j=1}^{\infty}(1-2^{-j})$.

Thus, the upper bound
\[
 \Prb[Z_0>2^a-2]\le \gamma^{-1}\lambda^a2^{-a^2}
\]
is sharp up to an absolute multiplicative constant.
\end{proposition}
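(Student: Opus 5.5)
The plan is to reuse the subspace construction from Proposition~\ref{prop:fixed-bucket-sharpness}, this time with a rectangular matrix. The key identity is an exact rank-count formula rather than a limit.

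\emph{Construction.} Take $N\ge d+1$ and let $W\le\F_2^N$ be a $d$-dimensional subspace. Choose $v\notin W$ and set $S:=(W\setminus\{0\})\cup\{v\}$. Then $|S|=2^d=m$, and all elements of $S$ are distinct and nonzero.

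\emph{Reduction to a rank event.} Let $M:=h|_W:W\to\F_2^\ell$. After fixing a basis of $W$, $M$ is a uniformly random $\ell\times d$ binary matrix. As in the proof of Proposition~\ref{prop:fixed-bucket-sharpness}, if $\nul(M)=\dim\ker M\ge a$, then the zero bucket contains $2^{\nul(M)}-1\ge 2^a-1$ points of $W\setminus\{0\}$. Hence
\[
\Prb\bigl[|\{x\in S:h(x)=0\}|>2^a-2\bigr]\ge\Prb[\nul(M)=a]=\Prb[\rank M=\rho],
\qquad \rho:=d-a .
\]
The hypotheses $a\le d$ and $a\ge d-\ell$ say exactly that $0\le\rho\le\min(d,\ell)$, so this rank is attainable.

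\emph{Counting.} I would use the standard formula for the number of $\ell\times d$ matrices over $\F_2$ of rank $\rho$:
\[
\prod_{i=0}^{\rho-1}\frac{(2^\ell-2^i)(2^d-2^i)}{2^\rho-2^i}.
\]
It comes from choosing the column space, a $\rho$-dimensional subspace of $\F_2^\ell$, together with a surjection from $\F_2^d$ onto it; alternatively one can cite \cite{FulmanGoldstein2015}. Then I bound each of the three products:
\[
\prod_{i<\rho}(2^\ell-2^i)\ge\gamma\,2^{\rho\ell},\qquad
\prod_{i<\rho}(2^d-2^i)\ge\gamma\,2^{\rho d},\qquad
\prod_{i<\rho}(2^\rho-2^i)\le 2^{\rho^2}.
\]
The first two follow from $\prod_{s\ge1}(1-2^{-s})=\gamma$ after factoring out $2^\ell$, respectively $2^d$, from each term. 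Dividing the count by $2^{\ell d}$ gives
\[
\Prb[\rank M=\rho]\ge\gamma^2\,2^{\rho\ell+\rho d-\rho^2-\ell d}=\gamma^2\,2^{-(\ell-\rho)(d-\rho)}.
\]

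\emph{Exponent.} With $d-\rho=a$ and $\ell-\rho=\ell-d+a$, the exponent is $-(\ell-d+a)a=-a^2+(d-\ell)a$. Thus the probability is at least $\gamma^2\lambda^a2^{-a^2}$. Comparing with Corollary~\ref{cor:dyadic-fixed-bucket-m-n} yields sharpness up to the absolute factor $\gamma^3$.

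The only step needing care is the rank-count formula and its product bounds. In the degenerate case $\rho=0$ the empty products equal $1$, and the probability is $2^{-\ell d}$, which agrees with the formula above. I do not anticipate a real obstacle beyond checking this edge case.
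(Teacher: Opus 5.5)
Your proposal is correct and is essentially the paper's proof: the same construction \(S=(W\setminus\{0\})\cup\{v\}\), the same reduction to \(\Pr[\rank M=d-a]\) for a uniformly random \(\ell\times d\) matrix, and the same estimates (each numerator product at least \(\gamma\) because \(d-a\le\min(d,\ell)\), and the remaining product bounded trivially). The only difference is cosmetic: you write the rank formula as a matrix count and sketch its derivation via the column space and a surjection onto it, whereas the paper cites the formula in normalized form from Fulman--Goldstein.
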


\begin{proof}
Let \(W\le \F_2^N\) be a \(d\)-dimensional subspace, and choose
\(v\notin W\). Define
 $S:=(W\setminus\{0\})\cup\{v\}$.
Then
 $|S|=(2^d-1)+1=2^d=m$,
and all elements of \(S\) are distinct and nonzero.

Let
\[
 M:=h|_W:W\to \F_2^\ell.
\]
After choosing a basis of \(W\), the map \(M\) is represented by a uniformly
random \(\ell\times d\) binary matrix. If
 $\nul(M)\ge a$,
then
\[
 |\ker(M)\setminus\{0\}|
 =
 2^{\nul(M)}-1
 \ge
 2^a-1.
\]
Since
\[
 \ker(M)\setminus\{0\}
 \subseteq
 \{x\in S:h(x)=0\},
\]
we get
\[
 |\{x\in S:h(x)=0\}|>2^a-2.
\]
Therefore
\[
 \Prb\left[
 |\{x\in S:h(x)=0\}|>2^a-2
 \right]
 \ge
 \Prb[\nul(M)\ge a]
 \ge
 \Prb[\nul(M)=a].
\]

It remains to lower-bound the probability that a uniformly random
\(\ell\times d\) binary matrix has nullity exactly \(a\). This is the same as
having rank \(d-a\). The standard rank formula in \cite{FulmanGoldstein2015} gives
\[
 \Prb[\rank(M)=d-a]
 =
 2^{-a(\ell-d+a)}
 \frac{
 \prod_{i=0}^{d-a-1}(1-2^{i-\ell})
 \prod_{i=0}^{d-a-1}(1-2^{i-d})
 }
 {
 \prod_{i=0}^{d-a-1}(1-2^{i-(d-a)})
 }.
\]
The denominator is at most \(1\). Also,
\[
 \prod_{i=0}^{d-a-1}(1-2^{i-d})
 =
 \prod_{j=a+1}^{d}(1-2^{-j})
 \ge
 \gamma,
\]
and, since \(a\ge d-\ell\), we have \(d-a\le \ell\), so
\[
 \prod_{i=0}^{d-a-1}(1-2^{i-\ell})
 \ge
 \gamma.
\]
Hence
\[
 \Prb[\nul(M)=a]
 =
 \Prb[\rank(M)=d-a]
 \ge
 \gamma^2 2^{-a(\ell-d+a)}.
\]
Finally,
\[
 2^{-a(\ell-d+a)}
 =
 2^{a(d-\ell)}2^{-a^2}
 =
 \lambda^a2^{-a^2}.
\]
Therefore
\[
 \Prb\left[
 |\{x\in S:h(x)=0\}|>2^a-2
 \right]
 \ge
 \gamma^2\lambda^a2^{-a^2}.
\]
\end{proof}

\section{Maximum-load bounds for \(m\) keys and \(n\) bins}
\label{app:m-keys-n-bins}

In this section we record the extension of the base-optimized argument to the
case where the number of keys is not necessarily equal to the number of bins.
Let the number of bins be
 $n=2^\ell$,
let \(S\subseteq\F_2^u\) have size
 $|S|=m$,
and write
\[
 \lambda:=\frac{m}{n}.
\]
Thus \(\lambda\) is the average load. For a linear map
\(h:\F_2^u\to\F_2^\ell\), define
\[
 \M(S,h):=\max_{y\in\F_2^\ell}|h^{-1}(y)\cap S|.
\]

The fully independent comparison scale in the sparse large-load regime is the
number \(t=t(m,n)\) defined by
\[
 t\ln\left(\frac{t}{e\lambda}\right)=\ln n.
\]
This scale is meaningful in the range
\[
 \frac{t}{\lambda}\to\infty,
\]
that is, when the maximum-load scale is much larger than the average load.

We first prove the tail bound for uniformly random surjective maps.

\begin{proposition}[Surjective tail bound for \(m\) keys and \(n\) bins]
\label{prop:general-surjective}
There exist absolute constants \(C_0,D_0>0\) such that the following holds.
Let \(U\ge \ell\), let \(S\subseteq\F_2^U\) have size \(m\), and let
 $H:\F_2^U\to\F_2^\ell$
be a uniformly random surjective linear map. Put \(n=2^\ell\) and
\(\lambda=m/n\). Then, for every \(T>0\) satisfying
\[
 T\ge D_0\lambda n^{1/T},
\]
one has
\[
 \Prb_H[\M(S,H)\ge T]
 \le
 C_0\left(\frac{\lambda n^{1/T}}{T}\right)^2.
\]
\end{proposition}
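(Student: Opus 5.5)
We follow the proof of Proposition~\ref{prop:surjective} verbatim, with the potential base now chosen in terms of \(n\) and \(T\) rather than \(\ell\) and \(R\).

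\textbf{Setup.} Put \(k:=U-\ell\) and use the JKZ potential \(\Phi_i=\E_x[b^{S_i(x)}]\). Since \(|S|=m\), the initial potential is
\[
 \Phi_0-1=\frac{m(b-1)}{2^U}=\frac{\lambda(b-1)}{2^k}\le \frac{\lambda b}{2^k}.
\]
Lemma~\ref{lem:potential-evolution} does not use \(|S|=2^\ell\), so both of its conclusions hold unchanged. Lemma~\ref{lem:heavy-bin-potential} also holds verbatim: if \(\M(S,H)\ge T\), then \(\Phi_k\ge b^T/n\).

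\textbf{Choice of base.} Take
\[
 b:=e\,n^{1/T}, \qquad \text{so that } b^T=e^T n .
\]
Then a heavy bin forces \(\Phi_k\ge e^T\).

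\textbf{Choice of threshold.} Set
\[
 \tau:=1+\frac{T}{2^k}.
\]
Since \(1+x\le e^x\), we get \(\tau^{2^k}\le e^{T}\). Hence \(\M(S,H)\ge T\) implies \(\Phi_k\ge\tau^{2^k}\).

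\textbf{Hypothesis of the tail lemma.} To apply Lemma~\ref{lem:quadratic-tail} with \(X_i=\Phi_i\), we need \(1+4(\Phi_0-1)\le\tau\). It suffices that
\[
 \frac{4\lambda e n^{1/T}}{2^k}\le \frac{T}{2^k},
\]
which is exactly the hypothesis \(T\ge D_0\lambda n^{1/T}\) once \(D_0\ge 4e\). We also need \(\Phi_0>1\); this holds as soon as \(m\ge1\), and the case \(m=0\) is trivial.

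\textbf{Conclusion.} Lemma~\ref{lem:quadratic-tail} then gives
\[
 \Prb[\M(S,H)\ge T]\le 48\left(\frac{\Phi_0-1}{\tau-1}\right)^2
 \le 48\left(\frac{e\lambda n^{1/T}}{T}\right)^2 ,
\]
so the statement holds with \(C_0=48e^2\).

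The only step needing any care is the choice of base. It must make \(b^T/n\) at least \(e^{T}\), so that the threshold \(\tau-1\) is of order \(T/2^k\), while keeping \(\Phi_0-1\) of order \(\lambda n^{1/T}/2^k\). The choice \(b=e\,n^{1/T}\) does both at once and mirrors \(b=e\ell^{1/R}\) in the balanced case. Everything after that is routine substitution, so I do not expect a real obstacle.
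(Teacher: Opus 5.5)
Your proposal is correct and is essentially the paper's proof: the same base $b=e\,n^{1/T}$, the same threshold $\tau=1+T/2^k$, the same verification of $1+4(\Phi_0-1)\le\tau$ with $D_0\ge 4e$, and the same constant $C_0=48e^2$. Your remark that $X_0=\Phi_0>1$ needs $m\ge1$ (with $m=0$ trivial) is a small point of care that the paper leaves implicit.
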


\begin{proof}
Let
 $k=U-\ell$.
As before, expose the kernel of \(H\) through a chain
\[
 V_0\le V_1\le\cdots\le V_k=\ker H,
\]
with \(V_0=\{0\}\) and \(\dim V_i=i\). For a base \(b>1\), define
\[
 \Phi_i:=\E_{x\in\F_2^U} \left[b^{S_i(x)}\right],
 \qquad
 S_i(x):=|(x+V_i)\cap S|.
\]
We use the same potential evolution and quadratic tail lemmas as before.

Choose
 $b=e n^{1/T}$.
Since \(V_0=\{0\}\), we have \(S_0(x)=1_S(x)\), and therefore
\[
 \Phi_0-1
 =
 \frac{|S|}{2^U}(b-1)
 =
 \frac{m}{n2^k}(b-1)
 \le
 \frac{\lambda b}{2^k}.
\]

If some final bucket has load at least \(T\), then, by Lemma~\ref{lem:heavy-bin-potential},
\[
 \Phi_k\ge \frac{b^T}{n}=
 \frac{(e n^{1/T})^T}{n}
 =
 e^T.
\]

Now define
\[
 \tau:=1+\frac{T}{2^k}.
\]
Then
\[
 \tau^{2^k}
 =
 \left(1+\frac{T}{2^k}\right)^{2^k}
 \le
 e^T.
\]
Hence
\[
 \M(S,H)\ge T
 \quad\Longrightarrow\quad
 \Phi_k\ge \tau^{2^k}.
\]

We verify the hypothesis of the quadratic potential tail lemma. Choose
 $D_0\ge 4e$.
Using \(\Phi_0-1\le \lambda b/2^k\), the definition of \(\tau\), and the
assumption \(T\ge D_0\lambda n^{1/T}\), we get
\begin{eqnarray*}
 4(\Phi_0-1)
 \le
 \frac{4\lambda b}{2^k}
 =
 \frac{4e\lambda n^{1/T}}{2^k}
 \le
 \frac{T}{2^k}
 =
 \tau-1.
\end{eqnarray*}
Thus
\[
 \tau\ge 1+4(\Phi_0-1).
\]
Applying the quadratic potential tail lemma with \(X_i=\Phi_i\), we get
\[
 \Prb_H[\M(S,H)\ge T]
 \le
 \Prb[\Phi_k\ge \tau^{2^k}]
 \le
 48\left(\frac{\Phi_0-1}{\tau-1}\right)^2.
\]
Finally,
\[
 \frac{\Phi_0-1}{\tau-1}
 \le
 \frac{\lambda b/2^k}{T/2^k}
 =
 \frac{\lambda b}{T}
 =
 \frac{e\lambda n^{1/T}}{T}.
\]
Therefore
\[
 \Prb_H[\M(S,H)\ge T]
 \le
 48e^2
 \left(\frac{\lambda n^{1/T}}{T}\right)^2.
\]
This proves the result with \(C_0=48e^2\).
\end{proof}

We next remove the surjectivity assumption, exactly as in the equal-size case.

\begin{theorem}[Tail bound for \(m\) keys and \(n\) bins]
\label{thm:general-tail}
There exist absolute constants \(C,D>0\) such that the following holds. Let
\(n=2^\ell\), let \(S\subseteq\F_2^u\) have size \(m\), and let
 $\lambda:=m/n$.
Let
 $h:\F_2^u\to\F_2^\ell$
be a uniformly random linear map. Then, for every \(T>0\) satisfying
 $T\ge D\lambda n^{1/T}$,
one has
\[
 \Prb_h[\M(S,h)\ge T]
 \le
 C\left(\frac{\lambda n^{1/T}}{T}\right)^2.
\]
\end{theorem}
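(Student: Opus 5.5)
The plan is to follow the reduction used for Theorem~\ref{thm:main}, with Proposition~\ref{prop:general-surjective} in place of Proposition~\ref{prop:surjective}. Set
\[
 p:=C_0\left(\frac{\lambda n^{1/T}}{T}\right)^2,
\]
with \(C_0,D_0\) from Proposition~\ref{prop:general-surjective}. Take \(D:=\max(D_0,\sqrt{C_0})\). Then the hypothesis \(T\ge D\lambda n^{1/T}\) implies both the hypothesis of the surjective proposition and \(p\le C_0/D^2\le 1\). This is not logically needed, but it keeps \(p\) a genuine probability scale. I only need \(p>0\), which holds because \(\lambda>0\) whenever \(S\neq\emptyset\); if \(S\) is empty the statement is trivial.

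Next, embed \(\F_2^u\) into \(\F_2^U\) by appending zero coordinates, choosing \(U\ge u\) large enough that \(2^{\ell-U}\le p\). Let \(V\) be the image of the embedding and regard \(S\subseteq V\). Let \(H:\F_2^U\to\F_2^\ell\) be a uniformly random linear map. By Lemma~\ref{lem:rank-deficiency}, \(H\) fails to be surjective with probability at most \(2^{\ell-U}\le p\). Conditioned on surjectivity, \(H\) is uniform among surjective maps, because every surjective map has the same probability. So Proposition~\ref{prop:general-surjective} applies to it, since \(|S|=m\) is the same set size and \(\lambda\) is unchanged. This gives \(\Prb[\M(S,H)\ge T\mid H\text{ surjective}]\le p\), and hence \(\Prb[\M(S,H)\ge T]\le 2p\).

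Finally, the restriction \(H|_V\) is a uniformly random linear map \(V\cong\F_2^u\to\F_2^\ell\): its matrix in the embedded coordinates consists of the first \(u\) columns of a uniform matrix. Loads depend only on the points of \(S\subseteq V\), so \(\M(S,H)=\M(S,H|_V)\) and \(h\) satisfies the same bound. Taking \(C=2C_0\) finishes the proof.

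There is no real obstacle here. The one point needing care is that the constant \(D\) must at least dominate \(D_0\), so that the surjective proposition applies for the same \(T\). The additional \(\sqrt{C_0}\) is cosmetic: it only guarantees \(p\le1\), which makes a suitable \(U\) with \(2^{\ell-U}\le p\) an honest choice and lets the constant-factor absorption go through cleanly.
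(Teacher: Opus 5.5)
Your proposal is correct and is essentially the paper's own proof. It uses the same $p_T=C_0(\lambda n^{1/T}/T)^2$ and the same choice of $D$ (covering $D_0$ and forcing $p_T\le 1$), and it follows the same steps: zero-padding embedding, Lemma~\ref{lem:rank-deficiency}, conditioning on surjectivity to apply Proposition~\ref{prop:general-surjective}, then restricting to the embedded copy, which gives $C=2C_0$.
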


\begin{proof}
Let
\[
 p_T:=C_0\left(\frac{\lambda n^{1/T}}{T}\right)^2,
\]
where \(C_0\) is the constant from Proposition~\ref{prop:general-surjective}.
Choose the constant \(D\) large enough so that \(p_T\le1\) whenever
\(T\ge D\lambda n^{1/T}\).

Fix \(u\ge\ell\) and \(S\subseteq\F_2^u\) of size \(m\). Choose \(U\ge u\) so
large that
 $2^{\ell-U}\le p_T$.
Embed \(\F_2^u\) into \(\F_2^U\) by appending \(U-u\) zero coordinates, and
regard \(S\) as a subset of this copy of \(\F_2^u\) inside \(\F_2^U\).

Let
 $H:\F_2^U\to\F_2^\ell$
be a uniformly random linear map. By the Lemma~\ref{lem:rank-deficiency},
\[
 \Prb[H\text{ is not surjective}]
 \le
 2^{\ell-U}
 \le
 p_T.
\]
Conditioned on \(H\) being surjective, the map \(H\) is uniformly distributed
among all surjective maps \(\F_2^U\to\F_2^\ell\). Hence
Proposition~\ref{prop:general-surjective} gives
\[
 \Prb\left[
 \M(S,H)\ge T
 \;\middle|\;
 H\text{ is surjective}
 \right]
 \le
 p_T.
\]
Therefore
\[
 \Prb_H[\M(S,H)\ge T]\le 2p_T.
\]

Finally, the restriction of \(H\) to the embedded copy of \(\F_2^u\) is a
uniformly random linear map \(\F_2^u\to\F_2^\ell\), and the loads are computed
only using points of \(S\). Thus the same estimate holds for a uniformly random
linear map \(h:\F_2^u\to\F_2^\ell\). Absorbing the factor \(2\) into the
constant proves the theorem.
\end{proof}

The previous theorem can be written in terms of the fully independent
large-load scale \(t\).

\begin{corollary}[Comparison with the fully independent scale]
\label{cor:general-comparison}
Let \(n=2^\ell\), let \(S\subseteq\F_2^u\) have size \(m\), and let
 $\lambda:= m/n$.
Let \(t=t(m,n)\) be defined by
\[
 t\ln\left(\frac{t}{e\lambda}\right)=\ln n.
\]
Then, for every \(R>1\) satisfying
\[
 R\left(\frac{t}{\lambda}\right)^{1-1/R}\ge D,
\]
one has
\[
 \Prb_h[\M(S,h)\ge Rt]
 \le
 \frac{C}{R^2}
 \left(\frac{\lambda}{t}\right)^{2-2/R}.
\]
\end{corollary}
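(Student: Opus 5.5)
The plan is to specialize Theorem~\ref{thm:general-tail} to the threshold \(T=Rt\) and rewrite the quantity \(\lambda n^{1/T}/T\) using the defining equation of \(t\). The key identity comes from \(t\ln(t/(e\lambda))=\ln n\), which gives
\[
 n^{1/t}=e^{\ln n/t}=\frac{t}{e\lambda},
 \qquad\text{hence}\qquad
 n^{1/(Rt)}=\left(\frac{t}{e\lambda}\right)^{1/R}.
\]
Substituting this, we get
\[
 \frac{\lambda n^{1/T}}{T}
 =\frac{\lambda}{Rt}\left(\frac{t}{e\lambda}\right)^{1/R}
 =\frac{e^{-1/R}}{R}\left(\frac{\lambda}{t}\right)^{1-1/R}
 \le \frac1R\left(\frac{\lambda}{t}\right)^{1-1/R}.
\]

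Next I would check the hypothesis of Theorem~\ref{thm:general-tail}. The condition \(T\ge D\lambda n^{1/T}\) is equivalent to \(\lambda n^{1/T}/T\le 1/D\). By the identity above, this is in turn equivalent to
\[
 R\left(\frac{t}{\lambda}\right)^{1-1/R}\ge D e^{-1/R}.
\]
Since \(e^{-1/R}\le 1\), this is implied by the stated assumption \(R(t/\lambda)^{1-1/R}\ge D\). So the corollary holds with the same constant \(D\) as in Theorem~\ref{thm:general-tail}; if one prefers, \(D\) may be enlarged by a factor of \(e\), but that is not needed.

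Finally, squaring the bound and applying Theorem~\ref{thm:general-tail} yields
\[
 \Prb_h[\M(S,h)\ge Rt]\le C\left(\frac{\lambda n^{1/T}}{T}\right)^2\le \frac{C}{R^2}\left(\frac{\lambda}{t}\right)^{2-2/R}.
\]

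The only step that needs care is the first identity, specifically that \(t\) is well defined and positive. The function \(t\mapsto t\ln(t/(e\lambda))\) is increasing for \(t>\lambda\) and ranges over \((-\lambda,\infty)\) there, so a unique root with \(t>\lambda\) exists whenever \(\ln n>0\). After that, everything is direct algebra; I expect no genuine obstacle beyond keeping track of the \(e^{-1/R}\) factor, which only helps.
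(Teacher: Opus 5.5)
Your proposal is correct and takes the same route as the paper: use $n^{1/t}=t/(e\lambda)$, apply Theorem~\ref{thm:general-tail} with $T=Rt$, and simplify. Your exact identity $\lambda n^{1/T}/T=e^{-1/R}R^{-1}(\lambda/t)^{1-1/R}$ also shows the hypothesis transfers with the same constant $D$, which is slightly cleaner than the paper's ``after changing the absolute constant $D$'', and your check that there is a unique root $t>\lambda$ supplies a detail the paper leaves implicit.
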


\begin{proof}
Since
\[
 t\ln\left(\frac{t}{e\lambda}\right)=\ln n,
\]
we have
\[
 n^{1/t}=\frac{t}{e\lambda}.
\]
Apply Theorem~\ref{thm:general-tail} with
 $T:=Rt$.
Then
\[
 n^{1/T}
 =
 n^{1/(Rt)}
 =
 \left(n^{1/t}\right)^{1/R}
 =
 \left(\frac{t}{e\lambda}\right)^{1/R}.
\]
The condition
 $T\ge D\lambda n^{1/T}$
becomes
\[
 Rt\ge D\lambda\left(\frac{t}{e\lambda}\right)^{1/R},
\]
which is implied, after changing the absolute constant \(D\), by
\[
 R\left(\frac{t}{\lambda}\right)^{1-1/R}\ge D.
\]
The tail bound gives
\begin{eqnarray*}
 \Prb_h[\M(S,h)\ge Rt]
 &\le&
 C\left(
 \frac{\lambda n^{1/(Rt)}}{Rt}
 \right)^2\\
 &=&
 C\left(
 \frac{\lambda}{Rt}
 \left(\frac{t}{e\lambda}\right)^{1/R}
 \right)^2\\
 &\le&
 \frac{C}{R^2}
 \left(\frac{\lambda}{t}\right)^{2-2/R}.
\end{eqnarray*}
This proves the corollary.
\end{proof}

Finally, integrating the preceding tail gives a leading-constant comparison
with the fully independent scale in the sparse large-load regime.

\begin{corollary}[Expectation in the sparse large-load regime]
\label{cor:general-expectation}
Assume that
\[
 \frac{t}{\lambda}\to\infty,
\]
where \(t=t(m,n)\) is defined by
\[
 t\ln\left(\frac{t}{e\lambda}\right)=\ln n.
\]
Then
\[
 \E_h[\M(S,h)]
 \le
 (1+o(1))t.
\]
\end{corollary}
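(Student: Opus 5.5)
We will integrate the tail of Corollary~\ref{cor:general-comparison}, following the proof of Corollary~\ref{cor:expectation}. Write \(\rho:=t/\lambda\to\infty\). By the defining equation, \(n^{1/t}=\rho/e\), so \(t=\ln n/\ln(\rho/e)\). Using the tail-integral formula, for any \(R_0>1\),
\[
 \E_h[\M(S,h)]\le R_0t+t\int_{R_0}^{\infty}\Prb_h[\M(S,h)\ge Rt]\,dR .
\]

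\emph{Choosing \(R_0\).} We choose \(R_0\) so that three things hold. The hypothesis \(R(t/\lambda)^{1-1/R}\ge D\) should hold for all \(R\ge R_0\). Since \(R\mapsto R\rho^{1-1/R}\) is increasing, it suffices to check this at \(R_0\). We also need \(R_0=1+o(1)\), and the remaining integral should be \(o(1)\).

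A natural choice is
\[
 R_0:=1+\frac{\ln\ln\rho+F}{\ln\rho},
\]
where \(F\) is a large constant. As in \eqref{HIHO}, this gives \(\rho^{1-1/R_0}=e^{F}(1+o(1))\ln\rho\ge D\) for large \(\rho\). Any choice with \(R_0\to1\) and \(\rho^{1-1/R_0}\to\infty\) would also do, for instance \(R_0=1+1/\sqrt{\ln\rho}\). We only need \(1+o(1)\), not a second-order term.

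\emph{The tail integral.} By Corollary~\ref{cor:general-comparison}, the integral is at most
\[
 C\int_{R_0}^\infty R^{-2}\rho^{-2(1-1/R)}\,dR .
\]
Substituting \(x=1-1/R\), so that \(dx=dR/R^2\), turns this into
\[
 C\int_{x_0}^1\rho^{-2x}\,dx\le \frac{C}{2\ln\rho}\rho^{-2x_0},
\]
where \(x_0=1-1/R_0\). With the choice above, \(\rho^{-2x_0}=O\bigl(1/(\ln\rho)^2\bigr)\), so the whole term is \(o(1)\) as \(\rho\to\infty\). Hence
\[
 \E[\M]\le (R_0+o(1))t=(1+o(1))t .
\]

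\emph{Main obstacle.} The only delicate point is uniformity. Corollary~\ref{cor:general-comparison} comes with an absolute constant \(D\), and its proof replaced \(D\) by a larger constant. We must make sure this enlarged constant is still independent of \(m\) and \(n\), so that the condition at \(R_0\) holds once \(\rho\) exceeds a fixed threshold.

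We also check that nothing else depends on \(n\) separately from \(\rho\). The integrand involves only \(\rho\) and \(R\), and the \(t\) factor has been pulled out, so the \(o(1)\) is uniform in \(\rho\). A minor technicality is that the tail bound requires \(R>1\), which is fine since \(R_0>1\). Finally, \(\M\) is nonnegative and integer-valued, so the tail-integral formula is valid without further justification.
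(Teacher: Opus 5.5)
Your proposal is correct and follows essentially the paper's argument: integrate the tail of Corollary~\ref{cor:general-comparison} from $R_0$, use the monotonicity of $R\mapsto R\rho^{1-1/R}$, and substitute $x=1-1/R$. The only difference is the choice of starting point. The paper takes $R_0=1+F/\ln\rho$, so that $R_0\rho^{1-1/R_0}=e^F(1+o(1))$ and the tail integral is $O(1/\ln\rho)$, which gives the slightly sharper rate $1+O(1/\ln\rho)$; your extra $\ln\ln\rho$ term is unnecessary here but harmless.
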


\begin{proof}
Let
\[
 \rho:=\frac{t}{\lambda}.
\]
By assumption, \(\rho\to\infty\). Choose a fixed constant \(F>0\) large enough,
and set
\[
 R_0:=1+\frac{F}{\ln\rho}.
\]
Then
\[
 1-\frac1{R_0}
 =
 \frac{R_0-1}{R_0}
 =
 \frac{F+o(1)}{\ln\rho}.
\]
Hence
\[
 R_0\rho^{1-1/R_0}
 =
 e^F(1+o(1)).
\]
Choosing \(F\) large enough ensures that
 $R_0\rho^{1-1/R_0}\ge D$
for all sufficiently large \(n\). Since the function
 $R\mapsto R\rho^{1-1/R}$
is increasing for \(R>0\), the condition
 $R\rho^{1-1/R}\ge D$
holds for every \(R\ge R_0\). Therefore
Corollary~\ref{cor:general-comparison} applies throughout the range
\(R\ge R_0\).

Using the tail-integral formula,
\begin{eqnarray*}
 \E_h[\M(S,h)]
 &=&
 \int_0^\infty \Prb[\M(S,h)\ge s]\,ds\\
 &\le&
 R_0t
 +
 t\int_{R_0}^\infty
 \Prb[\M(S,h)\ge Rt]\,dR.
\end{eqnarray*}
By Corollary~\ref{cor:general-comparison},
\[
 \E_h[\M(S,h)]
 \le
 R_0t
 +
 Ct\int_{R_0}^\infty
 \frac{1}{R^2}\rho^{-2+2/R}\,dR.
\]
Set
\[
 x:=1-\frac1R.
\]
Then \(dx=dR/R^2\), and the integral becomes
\[
 \int_{R_0}^\infty
 \frac{1}{R^2}\rho^{-2+2/R}\,dR
 =
 \int_{x_0}^{1}\rho^{-2x}\,dx,
 \qquad
 x_0:=1-\frac1{R_0}.
\]
Thus
\[
 \int_{x_0}^{1}\rho^{-2x}\,dx
 \le
 \frac{\rho^{-2x_0}}{2\ln\rho}.
\]
Since
\[
 x_0=\frac{F+o(1)}{\ln\rho},
\]
we have
 $\rho^{-2x_0}=e^{-2F+o(1)}$.
Therefore
\[
 \int_{x_0}^{1}\rho^{-2x}\,dx
 =
 O\left(\frac1{\ln\rho}\right)
 =
 o(1).
\]
Also
\[
 R_0=1+O\left(\frac1{\ln\rho}\right)=1+o(1).
\]
It follows that
\[
 \E_h[\M(S,h)]\le (1+o(1))t.
\]
\end{proof}

Thus, in the regime \(t/\lambda\to\infty\), binary linear hashing matches the
fully independent expected maximum-load scale up to a \(1+o(1)\) factor. This
improves the constant-factor tail and expectation bounds of the general
Jaber--Kumar--Zuckerman theorem in the sparse large-load range. It does not
address the dense balancing regime, where the maximum load is only a constant
factor above the average load \(\lambda\).

\end{document}